\theoremstyle{definition}
\newtheorem{notation}[theorem]{Notation}
\newcommand{\grad}{\nabla}
\newcommand{\grap}{\nabla_\perp}
\newcommand{\Div}{\operatorname{\mathrm{div}}}
\newcommand{\Divp}{\operatorname{\mathrm{div}}_\perp}
\newcommand{\rot}{\operatorname{\mathrm{curl}}}
\newcommand{\rots}{\operatorname{\textit{curl}}_\perp}
\newcommand{\rotv}{\operatorname{\textup{\textbf{curl}}}_\perp}
\newcommand{\rotw}{\operatorname{\widetilde{\textup{\textbf{curl}}}}_\perp }
\newcommand{\Deltap}{\Delta_\perp}
\newcommand  {\N}{{\mathbb N}}
\newcommand  {\R}{{\mathbb R}}
\newcommand  {\T}{{\mathbb T}}
\newcommand  {\Z}{{\mathbb Z}}
\renewcommand{\SS}{\mathbb S}
\newcommand  {\ba}{{\boldsymbol{a}}}
\newcommand  {\bx}{{\boldsymbol{x}}}
\newcommand  {\by}{{\boldsymbol{y}}}
\newcommand  {\bz}{{\boldsymbol{z}}}
\newcommand  {\cY}{\mathcal{Y}}
\newcommand  {\cZ}{\mathcal{Z}}
\newcommand  {\gH}{\mathfrak{H}}
\newcommand  {\gL}{\mathfrak{L}}
\newcommand  {\rd}{\mathrm{d}}
\newcommand  {\EE}{\boldsymbol{\mathsf E}}
\newcommand  {\HH}{\boldsymbol{\mathsf H}}
\newcommand  {\MM}{{\boldsymbol{\mathsf M}}}
\newcommand  {\NN}{{\boldsymbol{\mathsf N}}}
\newcommand  {\XX}{\boldsymbol{\mathsf X}}
\newcommand  {\cc}{{\boldsymbol{\mathsf c}}}
\newcommand  {\ee}{{\boldsymbol{\mathsf e}}}
\newcommand  {\nn}{\boldsymbol{\mathsf n}}
\newcommand  {\uu}{\boldsymbol{\mathsf u}}
\newcommand  {\vv}{\boldsymbol{\mathsf v}}
\newcommand  {\xx}{\boldsymbol{\mathsf x}}
\newcommand  {\mm}{\mathsf n_{\mathsf{opt}}}
\newcommand{\iti}[1]{\noindent\mbox{{\em (\romannumeral #1)\/}}}
\newcommand{\db}[1]{_{\raise-0.3ex\hbox{$\scriptstyle #1$}}}
\newcommand{\dd}[1]{_{\raise-1.5pt\hbox{$\scriptstyle #1$}}}
\newcommand{\di}{\displaystyle}
\newcommand{\dr}{{\rm d}}
\newcommand{\el}{_{\mathsf N}}
\newcommand{\ma}{_{\mathsf T}}
\newcommand{\xp}{x_\perp}
\newcommand{\up}{\uu_\perp}
\newcommand{\vp}{\vv_\perp}
\newcommand{\vvkj}{\vv^m_{j}}
\newcommand{\vpkj}{\vv^m_{\perp,j}}
\newcommand{\np}{\nn_\perp}
\newcommand{\neu}{^{\sf neu}}
\newcommand{\dir}{^{\sf dir}}
\newcommand{\mix}{^{\sf mix}}
\newcommand{\too}{^{\sf top}}
\newcommand{\TE}{^{\sf TE}}
\newcommand{\TM}{^{\sf TM}}
\newcommand{\TEM}{^{\sf TEM}}
\newcommand{\rel}{_{\sf rel}}
\newcommand{\cnd}{_{\sf cd}}
\newcommand{\ins}{_{\sf ins}}
\title{Maxwell eigenmodes in product domains}
\abstract{This paper is devoted to Maxwell modes in three-dimensional bounded electromagnetic cavities that have the form of a product of lower dimensional domains in some system of coordinates. The boundary conditions are those of the perfectly conducting or perfectly insulating body. The main case of interest is products in Cartesian variables. Cylindrical and spherical variables are also addressed. We exhibit common structures of polarization type for eigenmodes. In the Cartesian case, the cavity eigenvalues can be obtained as sums of Dirichlet or Neumann eigenvalues of positive Laplace operators and the corresponding eigenvectors have a tensor product form. We compare these descriptions with the spherical wave function Ansatz for a ball and show why the cavity eigenvalue of the ball are also Dirichlet or Neumann eigenvalues of some scalar operators. As application of our general formulas, we find explicit eigenpairs in a cuboid, in a circular cylinder, and in a cylinder with a coaxial circular hole. This latter example exhibit interesting ``TEM'' eigenmodes that have a one-dimensional vibrating string structure, and contribute to the least energy modes if the cylinder is long enough.
}
\keywords{Electromagnetic cavity, perfectly conducting cavity, Maxwell equations, short-circuit electric or magnetic eigenfunctions, TE or TM polarization, Debye potential.}
\begin{document}


\section{Introduction}
\label{S1}
A domain $\Omega$ of $\R^n$ is called a product domain if for a choice of Cartesian coordinates $\bx=(\by,\bz)$ in $\R^n$, the domain $\Omega$ coincides with the product $\cY\times\cZ$ in the sense that
\[
   \bx\in\Omega\quad\Longleftrightarrow\quad\by\in\cY \ \ \mbox{and} \ \ \bz\in\cZ.
\]
In the three-dimensional space ($n=3$), we may assume without restriction that $\cY$ has the dimension $2$, and $\cZ$, dimension $1$, hence is an interval. Such a domain may also be called a cylinder.
The main motivation of this work is to exhibit for electromagnetic cavity problems in a cylinder with arbitrary cross section similar properties as those, well known, for acoustic modal problems.

Most of the results we present are not new and have their roots in the pioneering works by Mie (1908) and Debye (1909). Expressions for cavity modes in cylinders can be found in \cite{Jackson1998} and in balls in \cite{HansonYakovlev}. Our aim is to adopt a synthetic presentation that clearly links Laplace or Laplace-like eigenvectors to electromagnetic eigenmodes via TE (transverse electric) and TM (transverse magnetic) vector wave functions: The Laplace eigenvectors appear as Debye potentials. In particular we carefully address the case when the cross section $\omega$ of $\Omega$ contains holes (modelling for instance metallic wires) and prove the completeness of a system of TE, TM and TEM modes. The TEM eigenmodes that enjoy both features of transverse electric and magnetic polarizations, often contribute the lowest frequencies, and this can be precisely quantified. This case was the first motivation for the present investigation.

The knowledge of Maxwell eigenmodes to an applied mathematics audience has some importance. Our results can be used as benchmarks for numerical methods for the computation of cavity modes. 
Also for transmission problems, our description of the interior eigenmodes may be useful, since there exist standard numerical methods that fail if the frequency coincides with an interior eigenfrequency.

\subsection{The case of acoustics: The Dirichlet Laplacian}
The Laplace operator $\Delta$ in $\R^n$ is expressed in variables $\bx=(x_1,\ldots,x_n)$ as $\Delta=\sum_{1\le j\le n}\partial^2_{x_j}$ and it is the sum of the two Laplace operators in variables $\by$ and $\bz$
\[
   \Delta = \Delta_\by + \Delta_\bz\,.
\]
The Sobolev space $H^1(\Omega)$ on the product domain $\Omega=\cY\times\cZ$ can be written as
\[
   H^1(\Omega) = L^2(\cY,H^1(\cZ)) \cap H^1(\cY,L^2(\cZ)).
\]
Likewise, the closure $H^1_0(\Omega)$ in $H^1(\Omega)$ of smooth functions with compact support in $\Omega$ satisfies 
\[
   H^1_0(\Omega) = L^2(\cY,H^1_0(\cZ)) \cap H^1_0(\cY,L^2(\cZ)).
\]
As a direct consequence we find, for any bounded product domain $\Omega$, the full spectral description of the Dirichlet Laplacian:

\begin{theorem}
\label{th:1}
Let $(\lambda_j,v_j)_{j\ge1}$ and $(\mu_m,w_m)_{m\ge1}$ be the spectral sequences of $-\Delta_\by$ on $H^1_0(\cY)$ and of $-\Delta_\bz$ on $H^1_0(\cZ)$, respectively. This means that
\[
   \lambda_1<\lambda_2\le\ldots
\]
is the eigenvalue sequence of $-\Delta_\by$ and $(v_j)_{j\ge1}$ is an associated orthonormal basis, and the same for $-\Delta_\bz$. 

Then the set of eigenvalues of $-\Delta$ on $H^1_0(\cY\times\cZ)$ is
\[
   \{\lambda_j+\mu_m,\quad j\ge1,\ \ m\ge1\}
\]
and the tensor functions $u_{j,m} := v_j\otimes w_m$, i.e. defined as
\[
   u_{j,m}(\bx) = v_j(\by)w_m(\bz),
\]
are orthonormal associated eigenvectors, and they form a basis of $H^1_0(\Omega)$.
\end{theorem}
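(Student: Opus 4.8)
\emph{Proof proposal.} The plan is to realize the Dirichlet Laplacian $-\Delta$ as the self-adjoint operator $A$ on $L^2(\Omega)$ associated with the closed nonnegative form $a(u,\phi)=\int_\Omega\grad u\cdot\overline{\grad\phi}$ with form domain $H^1_0(\Omega)$, and then to invoke standard spectral theory: since $\Omega$ is bounded, the embedding $H^1_0(\Omega)\hookrightarrow L^2(\Omega)$ is compact, so $A$ has compact resolvent; hence its spectrum is a discrete sequence of finite-multiplicity eigenvalues tending to $+\infty$, $L^2(\Omega)$ has a Hilbert basis of eigenvectors of $A$, and that basis is also total in the form domain for the inner product $(u,\phi)\mapsto a(u,\phi)+(u,\phi)_{L^2(\Omega)}$, which is equivalent to that of $H^1$. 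The whole statement then follows once one shows that $(u_{j,m})_{j,m\ge1}$ is such a basis and that $A u_{j,m}=(\lambda_j+\mu_m)u_{j,m}$: any eigenvalue of $A$ not of the form $\lambda_j+\mu_m$ would have an eigenvector $L^2$-orthogonal to all $u_{j,m}$, hence zero, so the set of eigenvalues is exactly $\{\lambda_j+\mu_m\}$; and the totality in $H^1_0(\Omega)$ is the last assertion.

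\emph{Step 1 (membership and eigenvalue equation).} First I would use the tensor description $H^1_0(\Omega)=L^2(\cY,H^1_0(\cZ))\cap H^1_0(\cY,L^2(\cZ))$ recalled above: since $v_j\in H^1_0(\cY)$ and $w_m\in H^1_0(\cZ)$, the map $\by\mapsto v_j(\by)\,w_m$ lies in $L^2(\cY,H^1_0(\cZ))$ with norm $\|v_j\|_{L^2(\cY)}\|w_m\|_{H^1(\cZ)}$, and symmetrically the map $\bz\mapsto w_m(\bz)\,v_j$ lies in $H^1_0(\cY,L^2(\cZ))$; hence $u_{j,m}=v_j\otimes w_m\in H^1_0(\Omega)$, with weak gradient $\grad u_{j,m}=(w_m\,\grad_\by v_j,\ v_j\,\grad_\bz w_m)$. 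Then for any $\phi\in H^1_0(\Omega)$, Fubini's theorem gives $a(u_{j,m},\phi)=\int_\cZ w_m\big(\int_\cY\grad_\by v_j\cdot\overline{\grad_\by\phi}\,\rd\by\big)\rd\bz+\int_\cY v_j\big(\int_\cZ\grad_\bz w_m\cdot\overline{\grad_\bz\phi}\,\rd\bz\big)\rd\by$; since for a.e.\ $\bz$ the slice $\phi(\cdot,\bz)$ belongs to $H^1_0(\cY)$ and for a.e.\ $\by$ the slice $\phi(\by,\cdot)$ belongs to $H^1_0(\cZ)$, the eigenvalue equations for $v_j$ and $w_m$ turn the inner integrals into $\lambda_j\int_\cY v_j\overline{\phi}\,\rd\by$ and $\mu_m\int_\cZ w_m\overline{\phi}\,\rd\bz$. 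Adding, $a(u_{j,m},\phi)=(\lambda_j+\mu_m)(u_{j,m},\phi)_{L^2(\Omega)}$, i.e.\ $u_{j,m}$ is an eigenvector of $A$ for $\lambda_j+\mu_m$.

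\emph{Step 2 (orthonormality and completeness).} Orthonormality of $(u_{j,m})$ in $L^2(\Omega)$ is immediate from Fubini and that of $(v_j)$ and $(w_m)$. For completeness I would run the classical Hilbert tensor-product argument: if $f\in L^2(\Omega)$ is orthogonal to all $u_{j,m}$, set $G_m(\by):=\int_\cZ f(\by,\bz)\,\overline{w_m(\bz)}\,\rd\bz\in L^2(\cY)$; then $G_m$ is orthogonal to every $v_j$, hence $G_m=0$ because $(v_j)$ is a basis of $L^2(\cY)$; as this holds for all $m$ and $(w_m)$ is a basis of $L^2(\cZ)$, the slice $f(\by,\cdot)$ vanishes for a.e.\ $\by$, so $f=0$. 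The only bookkeeping is that all the ``a.e.'' statements range over countably many $j,m$.

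\emph{Expected main obstacle.} There is no deep difficulty: the content is that separation of variables is compatible with the homogeneous Dirichlet condition, which is exactly the identity $H^1_0(\cY\times\cZ)=L^2(\cY,H^1_0(\cZ))\cap H^1_0(\cY,L^2(\cZ))$ borrowed from the discussion above. The step needing most care is Step 1 --- justifying Fubini, identifying the weak gradient of a tensor product, and using that almost every $1$- or $2$-dimensional slice of an element of $H^1_0(\Omega)$ again satisfies the Dirichlet condition on $\cZ$ or on $\cY$ --- whereas the passage from ``Hilbert basis of eigenvectors'' to ``complete list of eigenvalues and of eigenspaces, in both $L^2(\Omega)$ and $H^1_0(\Omega)$'' is routine spectral theory.
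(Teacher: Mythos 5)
Your proposal is correct and takes essentially the same route as the paper, which states Theorem \ref{th:1} without a written proof, presenting it as a ``direct consequence'' of the splitting $\Delta=\Delta_\by+\Delta_\bz$ together with the identity $H^1_0(\cY\times\cZ)=L^2(\cY,H^1_0(\cZ))\cap H^1_0(\cY,L^2(\cZ))$. Your Steps 1--2 simply make that deduction explicit (tensor eigenfunctions via slicing and Fubini, completeness by the standard Hilbert tensor-product argument, and routine spectral theory for the compact-resolvent form operator), and nothing is missing.
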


Of course, a similar result holds for Neumann boundary conditions. This holds also for mixed Dirichlet-Neumann problems of the type
\[
   \mbox{(Dirichlet on $\partial\cY\times\cZ$) \quad and \quad (Neumann on $\cY\times\partial\cZ$)}
\]
for which the $(\lambda_j,v_j)$ are still the Dirichlet eigenpairs on $\cY$, but the $(\mu_m,w_m)$ have to be taken as the Neumann eigenpairs on $\cZ$. Finally the Dirichlet and Neumann conditions can be also be swapped between $\cY$ and $\cZ$.

\subsection{The case of electromagnetism: The Maxwell system}
From now on the space dimension is $n=3$. 
Let $\Omega$ be a domain in $\R^3$, representing a cavity filled by an homogeneous dielectric medium. 
We assume that the boundary of $\Omega$ represents {\em perfectly conducting} walls. After normalization, the cavity resonator problem is to find the frequencies $k\in\R$ and the non-zero electromagnetic fields $(\EE,\HH)$ in $L^2(\Omega)^6$ such that
\begin{equation}
\label{0E2}
   \left\{
   \begin{array}{ll}
   \rot\EE - i k\HH = 0 & \mbox{in}\quad\Omega,\\
   \rot\HH + i k\EE = 0 & \mbox{in}\quad\Omega,\\
   \Div\EE=0 \quad\mbox{and}\quad \Div\HH=0\quad & \mbox{in}\quad\Omega,\\
   \EE\times\nn = 0 \quad\mbox{and}\quad\HH\cdot\nn=0, 
   & \mbox{on}\quad\partial\Omega.
   \end{array}
   \right.
\end{equation}
Here, $\nn$ denotes the outward unit normal to $\partial\Omega$. The gauge conditions on the divergence are a consequence of the first two equations if $k\neq0$. Nevertheless we look for solutions of \eqref{0E2} including $k=0$. The occurrence of $k=0$ happens if and only if the domain $\Omega$ is topologically non-trivial, i.e.\ if $\Omega$ is not simply connected, or if $\partial\Omega$ is not connected, see Propositions 3.14 \& 3.18 in reference \cite{AmroucheBernardiDaugeGirault98}.

\begin{definition}
\label{0D1}
The triples $(k, \EE, \HH)$ solution of \eqref{0E2} with $(\EE, \HH)\neq0$ are called Maxwell eigenmodes, $k$ is called eigenfrequency, $k^2$ eigenvalue, and $\EE$, $\HH$ electric and magnetic eigenvectors.
\end{definition}

Let $\Omega$ be a bounded product domain in $\R^3$. This means that 
\begin{equation}
\label{1E2}
   \Omega = \omega\times I,\quad \omega\subset\R^2,\ \ I \ \mbox{interval in}\ \R.
\end{equation}
We denote correspondingly Cartesian coordinates in $\Omega$ by
$$
   x = (x_1,x_2,x_3) = (\xp,x_3),\quad \xp\in\omega \ \ \mbox{and} \ \ x_3\in I.
$$
We assume that $\omega$ is a bounded Lipschitz domain.  We note that the boundary of $\Omega$ is connected. But, if $\omega$ is not simply connected, the same holds for $\Omega$.

\begin{notation}
\label{not:1}
\begin{enumerate}
\item Denote by $\Deltap=\partial^2_1+\partial^2_2$ the Laplace operator in the variables $\xp$.
\item Let $\big(\lambda\dir_j,v\dir_j\big)_{j\ge1}$ be the eigenpair sequence of the Dirichlet problem in $\omega$ for the operator $-\Deltap$.
\item Let $\big(\lambda\neu_j,v\neu_j\big)_{j\ge0}$ be the eigenpair sequence of the Neumann problem in $\omega$ for the operator $-\Deltap$, with $\lambda\neu_0=0$ and $v\neu_0=1$.
\item  Let $\big(\mu\dir_m,w\dir_m\big)_{m\ge1}$ be the eigenpair sequence of the Dirichlet problem in $I$ for the operator $-\partial^2_3$.
\item Let $\big(\mu\neu_m,w\neu_m\big)_{m\ge0}$ be the eigenpair sequence of the Neumann problem in $I$ for the operator $-\partial^2_3$, with $\mu\neu_0=0$ and $w\neu_0=1$.
\end{enumerate}
\end{notation}

One of the results of this paper is (see Theorem \ref{2T1})

\begin{theorem}
\label{th:2}
Assume that $\omega$ is simply connected. Then the Maxwell eigenvalues $k^2$ span the set
\begin{equation}
\label{eq:th2}
   \big\{\lambda\dir_j+\mu\neu_m,\quad j\ge1,\ m\ge0\big\} \ \cup\ 
   \big\{\lambda\neu_j+\mu\dir_m,\quad j\ge1,\ m\ge1\big\}.
\end{equation}
(including repetition according to multiplicities).
\end{theorem}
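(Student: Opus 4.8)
The plan is to reduce the Maxwell system on $\Omega=\omega\times I$ to scalar problems via a polarization (TE/TM) splitting, exactly as the abstract announces: write the field as a part that is "transverse electric" (i.e.\ $\EE_3=0$) and one that is "transverse magnetic" ($\HH_3=0$), and identify a scalar Debye potential for each. Concretely, given a Maxwell eigenmode $(k,\EE,\HH)$, I would first look at the third components $\EE_3$ and $\HH_3$. Taking the third component of $\rot\HH + ik\EE=0$ and of $\rot\EE - ik\HH = 0$, and using that both fields are divergence-free, one finds that $\EE_3$ and $\HH_3$ each satisfy the scalar Helmholtz equation $-\Delta u = k^2 u$ on $\Omega$. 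For the boundary condition: the perfectly conducting condition $\EE\times\nn=0$ on $\partial\Omega$ forces, on the lateral boundary $\partial\omega\times I$, that the tangential component of $\EE$ vanishes; the $x_3$-direction is tangential there, so $\EE_3=0$ on $\partial\omega\times I$ — a Dirichlet condition in the $\xp$ variables. Dually, $\HH\cdot\nn=0$ on $\partial\Omega$ gives, on the lateral boundary, $\HH\cdot\np = 0$, and combining this with the Maxwell equations one gets a Neumann-type condition $\partial_\nn \HH_3 = 0$ on $\partial\omega\times I$. On the top and bottom faces $\omega\times\partial I$, the roles swap: there $\nn=\pm e_3$, so $\EE\times\nn=0$ makes $\EE_1=\EE_2=0$ on $\omega\times\partial I$ and (via $\Div\EE=0$) $\partial_3\EE_3 = 0$ there — a Neumann condition in $x_3$ — while $\HH\cdot\nn = \pm\HH_3 = 0$ on $\omega\times\partial I$ gives $\HH_3$ a Dirichlet condition in $x_3$.

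Next I would invoke the tensor-product spectral theory. The potential $\EE_3$ satisfies $-\Delta u = k^2 u$ with Dirichlet conditions on $\partial\omega\times I$ and Neumann conditions on $\omega\times\partial I$; this is precisely a mixed Dirichlet–Neumann problem of the type described right after Theorem~\ref{th:1}, so by that remark its eigenvalues are exactly $\{\lambda\dir_j+\mu\neu_m\}$ with eigenfunctions $v\dir_j\otimes w\neu_m$. Symmetrically, $\HH_3$ satisfies $-\Delta u = k^2 u$ with Neumann on $\partial\omega\times I$ and Dirichlet on $\omega\times\partial I$, whose eigenvalues are $\{\lambda\neu_j + \mu\dir_m\}$. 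This already shows that every Maxwell eigenvalue $k^2$ for which $\EE_3\not\equiv0$ or $\HH_3\not\equiv0$ lies in the set \eqref{eq:th2}. Conversely, I would show that each scalar eigenfunction of these two mixed problems is the Debye potential of a genuine Maxwell eigenmode: given $u = v\dir_j\otimes w\neu_m$, set $\EE_3 = u$ and reconstruct the transverse parts $\EE_\perp,\HH_\perp$ and $\HH_3$ from the curl equations (this is where one uses $k\ne 0$, and the case $k=0$, equivalently $j,m$ with $\lambda\dir_j+\mu\neu_m=0$, which cannot occur since $\lambda\dir_j>0$, so no separate treatment is needed here for the TE family; for the TM family $\lambda\neu_j+\mu\dir_m$ can only vanish if $j$ with $\lambda\neu_j=0$, i.e.\ $j=0$, which is excluded since the TM index runs $j\ge1$). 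One checks directly that the reconstructed fields are divergence-free and satisfy all four boundary conditions.

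The remaining point — and the one I expect to be the real obstacle — is to prove there are \emph{no other} eigenmodes, i.e.\ that a mode with $\EE_3\equiv0$ \emph{and} $\HH_3\equiv0$ (a "TEM" mode) cannot exist when $\omega$ is simply connected, and moreover that the TE and TM families together are complete (no multiplicity is lost in the union). For the TEM exclusion: if $\EE_3=\HH_3=0$, then from the curl equations $\EE_\perp$ and $\HH_\perp$ are two-dimensional fields on $\omega$ (depending on $x_3$) that are curl-free and divergence-free in $\xp$, hence locally gradients of harmonic functions; the tangential/normal boundary conditions on $\partial\omega$ then force them to be gradients of a harmonic function with both Dirichlet and Neumann data controlled, and simple connectedness of $\omega$ kills such fields — this is exactly the topological hypothesis, and it is the same mechanism behind the $k=0$ discussion via \cite{AmroucheBernardiDaugeGirault98}. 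For completeness and correct multiplicities, I would argue that $\EE\mapsto(\EE_3,\HH_3)$ maps the Maxwell eigenspace for a given $k^2$ injectively (by the TEM exclusion) and surjectively onto the direct sum of the two scalar eigenspaces (by the reconstruction above), so dimensions match; summing over $k^2$ gives the stated set with multiplicities. I would rely on Theorem~\ref{2T1} for the clean bookkeeping, but the structural heart is the polarization decomposition plus the topological vanishing argument.
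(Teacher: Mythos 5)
Your route is genuinely different from the paper's. The paper constructs the eigenmodes from Debye potentials $\MM[q]=\rot(q\,\ee_3)$, $\NN[q]=\rot\rot(q\,\ee_3)$ (Lemmas \ref{2L1}, \ref{2L2}, \ref{2L2TEM}) and then proves completeness by an $L^2$-orthogonality argument (Lemma \ref{2L3}): a divergence-free field orthogonal to all TE, TM and TEM modes is shown to vanish, first in its third component, then in its transverse part. You instead decompose an \emph{arbitrary} eigenmode through its longitudinal components $E_3$, $H_3$, identify the two mixed scalar problems they solve, and reconstruct the transverse parts; this is closer to textbook waveguide theory, while the paper's construction has the advantage of producing the explicit eigenvector formulas of Theorem \ref{2T1} along the way. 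Your identification of the boundary conditions for $E_3$ and $H_3$ and your TEM-exclusion argument are correct.

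There is, however, one concrete gap in the bookkeeping. The mixed problem solved by $H_3$ (Neumann on $\partial\omega\times I$, Dirichlet on $\omega\times\partial I$) has eigenvalue set $\{\lambda\neu_j+\mu\dir_m,\ j\ge 0,\ m\ge1\}$: it \emph{includes} $j=0$, i.e.\ the eigenfunctions $v\neu_0\otimes w\dir_m$ constant in $\xp$, with eigenvalues $\mu\dir_m$. These are exactly the TEM eigenvalues, which must be absent from \eqref{eq:th2} when $\omega$ is simply connected. As written, your forward inclusion only places $k^2$ in the enlarged set with $j\ge0$, and your injectivity/surjectivity dimension count is off by these $j=0$ contributions (the reconstruction of $\EE_\perp$, $\HH_\perp$ fails precisely there, since the transverse eigenvalue vanishes; indeed Stokes' theorem on $\omega$ with $\EE_\perp\times\np=0$ gives $\int_\omega\rots\EE_\perp\,\dr\xp=ik\int_\omega H_3\,\dr\xp=0$). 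The fix is one observation you already have the ingredients for: $\Div\HH=0$ and $\HH_\perp\cdot\np=0$ on $\partial\omega\times I$ give $\partial_3\int_\omega H_3(\cdot,x_3)\,\dr\xp=0$, and $H_3=0$ on $\omega\times\partial I$ then forces $\int_\omega H_3(\cdot,x_3)\,\dr\xp\equiv0$, so $H_3$ has no component on the $j=0$ eigenfunctions. With that added (and reading the scalar boundary value problems for $E_3$, $H_3$ variationally, since fields in $\XX\el(\Omega)$ need not have classical traces), your argument closes. A last cosmetic point: you swap the TE/TM labels once --- the family generated by a nonzero $E_3$ is the TM one --- but that is only terminology.
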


In the sequel we describe a corresponding basis of eigenvectors, constructed on the model of vector wave functions, according to the widely used $\MM$ and $\NN$ ansatz (Debye potentials). We include the case when $\omega$ is multiply connected: In this case, the relevant parameter is the number $D$ of connected components of $\partial\omega$ and to the set \eqref{eq:th2}, we have to add all the $\mu\dir_m$, each of them with multiplicity $D-1$, corresponding to the number of holes contained in $\omega$. The corresponding modes are the TEM modes that have no component in the direction $x_3$.

This paper is organized as follows. In section \ref{S2} we introduce general principles for the description of the Maxwell cavity modes. In section \ref{S3} we give formulas for the electric eigenmodes $(\kappa^2,\EE)$ in the case when $\Omega$ has the cylindric form $\omega\times I$ with $\omega\subset\R^2$ and $I\subset\R$, separating the modes according to their polarization in TE, TM and TEM types. In section \ref{S3M} we deduce the structure of magnetic cavity modes and synthesize results in Table \ref{tab:1}. 
In section \ref{S3x} we mention generalizations to special combinations of conducting and insulating boundary conditions.

As an application of our formulas, we consider in section \ref{S4} the case when $\Omega$ is a cube (or, more generally, a cuboid), and in section \ref{S5} the case when $\Omega$ is axisymmetric: Then $\Omega$ is a circular cylinder, or a circular cylinder with a coaxial cylindrical hole. We bring special attention to the latter case. 
Then the TEM modes appear in the explicit form \eqref{5E15}.

We address the situation when $\Omega$ is a ball of radius $R$ in section \ref{sec:sph}. The analysis is in the same spirit and exhibits a close relation with a scalar Laplace-like operator in the ``cylinder'' $\SS^2\times(0,R)$.

Finally, in section \ref{S7}, again for product domains, we investigate the variable coefficient case, namely when $\varepsilon$ is varying transversally, i.e.\  independently of the axial variable $x_3$. Then the TE and TM structures are no longer a valid Ansatz, in general. In replacement, we obtain wave guide formulations with separation of variables and tensor product form for eigenmodes.

\section{Preliminaries}
\label{S2}

\subsection{Electric and magnetic formulations for the Maxwell spectrum}
\label{se:Max}
We first recall the definition of the standard functional spaces associated with
Maxwell equations on a domain $\Omega\subset\R^3$. The curl in 3D is defined as
\[
   \rot\uu = \begin{pmatrix}
      \partial_2 u_3 - \partial_3 u_2 \\ 
      \partial_3 u_1 - \partial_1 u_3 \\ 
      \partial_1 u_2 - \partial_2 u_1 \\ 
      \end{pmatrix}
    \quad\mbox{for}\quad \uu=(u_1,u_2,u_3)
\]
and $\HH(\rot,\Omega)$ is the space of $L^2(\Omega)$ fields
with curl in $L^2(\Omega)$, while $\HH_0(\rot,\Omega)$ is the subspace of $\HH(\rot,\Omega)$
with perfectly conducting electric boundary condition $\uu\times\nn=0$.

The divergence in 3D is defined as
\[
   \Div\uu = \partial_1u_1+\partial_2u_2+\partial_3u_3
    \quad\mbox{for}\quad \uu=(u_1,u_2,u_3)
\]
and $\HH(\Div,\Omega)$ is the space of $L^2(\Omega)$ fields
with divergence in $L^2(\Omega)$, and $\HH_0(\Div,\Omega)$ the subspace of $\HH(\Div,\Omega)$
with perfectly conducting magnetic boundary conditions $\uu\cdot\nn=0$. 

It is well known that the system of equations \eqref{0E2} can be formulated with $\EE$ only (electric formulation) or $\HH$ only (magnetic formulation). Each time a vector Helmholtz equation is found. Convenient functional spaces for the electric and magnetic variational formulations are 
$$
\XX\el(\Omega) :=  \HH_0(\rot,\Omega) \cap \HH(\Div,\Omega)
\quad\mbox{and}\quad
\XX\ma(\Omega) :=  \HH(\rot,\Omega) \cap \HH_0(\Div,\Omega) \: .
$$
In these spaces, \emph{regularized} formulations make sense. This means that, introducing a parameter 
\[
   s\ge0
\]
we introduce the electric variational formulations:\\[1ex] 
{\it Find the eigenpairs $(\Lambda=\kappa^2,\uu)$ with $\uu\ne0$ in $\XX\el(\Omega)$ such that}
\begin{equation}
    \int_\Omega \rot\uu\,\rot\vv + s\Div\uu\Div\vv\;\dr\xx = \Lambda \int_\Omega
    \uu\cdot\vv\;\dr\xx, \quad \forall\vv\in\XX\el(\Omega),
\label{1E1}
\end{equation}
while magnetic formulations are:\\[1ex] 
{\it Find the eigenpairs $(\Lambda=\kappa^2,\uu)$ with $\uu\ne0$ in $\XX\ma(\Omega)$  such that}
\begin{equation}
    \int_\Omega \rot\uu\,\rot\vv + s\Div\uu\Div\vv\;\dr\xx = \Lambda \int_\Omega
   \uu\cdot\vv\;\dr\xx,
   \quad \forall\vv\in\XX\ma(\Omega).
\label{1E1m}
\end{equation}

Relying on \cite[Theorem 1.1]{CoDa1999M2AS}, we know that
the eigenpairs of \eqref{1E1} split in two families
\begin{itemize}
\item[a)] the Maxwell eigenvalues, independent of $s$, for which the eigenvectors are divergence free,
\item[b)] the gradients of the Dirichlet eigenvectors for $-\Delta$ on $\Omega$, associated with eigenvalues $s\lambda\dir_\Omega$.
\end{itemize}
Thus the regularization by $s\Div\uu\Div\vv$ makes the problem elliptic as soon as $s>0$ and gives a description of the infinite dimensional kernel of the $\rot\rot$ operator. The gauge conditions $\Div\EE=0$ and $\Div\HH=0$ in \eqref{0E2} ensure that we are always in case a). We can state

\begin{lemma}
\label{1L1}
\begin{enumerate}
\item Let $(k,\EE,\HH)$ be a Maxwell eigenmode solution of \eqref{0E2}. Set $\Lambda =k^2$. Then, if $\EE\neq0$, it is solution of \eqref{1E1} for any $s\ge0$, and if $\HH\neq0$, it is solution of \eqref{1E1m} for any $s\ge0$.
\item Let $s\ge0$. If $\Lambda\neq 0$ and $\uu$ is solution of \eqref{1E1} with $\Div\uu=0$, then setting $k=\pm\sqrt{\Lambda}$, $\EE=\uu$, and $\HH = \frac1{ik}\rot\EE$, we obtain an eigenmode of \eqref{0E2}.
\item Let $s\ge0$. If $\Lambda\neq 0$ and $\uu$ is solution of \eqref{1E1m} with $\Div\uu=0$, then setting $k=\pm\sqrt{\Lambda}$, $\HH=\uu$, and $\EE = -\frac1{ik}\rot\HH$, we obtain an eigenmode of \eqref{0E2}.
\end{enumerate}
\end{lemma}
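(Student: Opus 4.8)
The plan is to verify the three items of Lemma~\ref{1L1} directly, using the characterization of the spectrum of the regularized problems recalled just before the statement (from \cite[Theorem 1.1]{CoDa1999M2AS}) together with elementary identities for $\rot$ and $\Div$. The whole lemma is a bookkeeping exercise translating between the first-order system \eqref{0E2} and the second-order regularized variational formulations \eqref{1E1}, \eqref{1E1m}; the key point is that the gauge conditions $\Div\EE=0$, $\Div\HH=0$ force us into case a), where the $s$-term is inert.

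For item~(1): suppose $(k,\EE,\HH)$ solves \eqref{0E2} with $\EE\neq 0$. From $\rot\EE-ik\HH=0$ and $\rot\HH+ik\EE=0$ one gets $\rot\rot\EE = ik\,\rot\HH = k^2\EE$ in $\Omega$. The boundary condition $\EE\times\nn=0$ places $\EE\in\HH_0(\rot,\Omega)$, and $\Div\EE=0$ gives $\EE\in\HH(\Div,\Omega)$, so $\EE\in\XX\el(\Omega)$. Then I would test $\rot\rot\EE=k^2\EE$ against $\vv\in\XX\el(\Omega)$ and integrate by parts: $\int_\Omega\rot\EE\cdot\rot\vv\,\dr\xx = k^2\int_\Omega\EE\cdot\vv\,\dr\xx$, with no boundary term because $\EE\times\nn=0$ (the integration-by-parts formula for $\rot$ pairs the tangential trace of $\EE$ with the full trace of $\vv$, or vice versa; either way one factor has vanishing tangential trace). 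Since $\Div\EE=0$, the term $s\int_\Omega\Div\EE\,\Div\vv\,\dr\xx$ vanishes identically, so \eqref{1E1} holds with $\Lambda=k^2$ for every $s\ge 0$. The argument for $\HH$ and \eqref{1E1m} is symmetric, using $\rot\rot\HH=k^2\HH$, $\HH\cdot\nn=0$ (hence $\HH\in\HH_0(\Div,\Omega)$), and the natural boundary condition $\rot\HH\times\nn=0$ arising in the $\XX\ma$ formulation — but since $\Div\HH=0$ anyway, the $s$-term drops and the computation is again just integration by parts.

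For items~(2) and~(3): let $\uu$ solve \eqref{1E1} with $\Div\uu=0$ and $\Lambda\neq 0$. Because $\Div\uu=0$, the variational identity \eqref{1E1} reduces to $\int_\Omega\rot\uu\cdot\rot\vv\,\dr\xx=\Lambda\int_\Omega\uu\cdot\vv\,\dr\xx$ for all $\vv\in\XX\el(\Omega)$, which is precisely the statement that $\uu$ is a Maxwell eigenvector in case a) — equivalently, by \cite[Theorem 1.1]{CoDa1999M2AS}, that $\rot\rot\uu=\Lambda\uu$ in $\Omega$ with $\uu\times\nn=0$ on $\partial\Omega$ and the natural condition $\Div\uu=0$ already imposed. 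Set $k=\pm\sqrt{\Lambda}\neq 0$, $\EE=\uu$, $\HH=\frac{1}{ik}\rot\EE$. Then $\rot\EE-ik\HH=0$ by definition of $\HH$; $\rot\HH+ik\EE = \frac{1}{ik}\rot\rot\EE+ik\EE = \frac{1}{ik}(\Lambda\EE - k^2\EE)=0$; $\Div\EE=0$ by hypothesis; $\Div\HH=\frac{1}{ik}\Div\rot\EE=0$ automatically; $\EE\times\nn=0$ from $\uu\in\HH_0(\rot,\Omega)$; and $\HH\cdot\nn = \frac{1}{ik}(\rot\EE)\cdot\nn$, which vanishes because the normal trace of a curl is the surface divergence of the tangential trace $\EE\times\nn=0$ (a consequence of $\rot\EE\in\HH(\Div,\Omega)$ together with $\EE\times\nn=0$; alternatively this is the well-known fact that $\rot\HH_0(\rot,\Omega)\subset\HH_0(\Div,\Omega)$). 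Hence $(k,\EE,\HH)$ solves \eqref{0E2}, and $(\EE,\HH)\neq 0$ since $\EE=\uu\neq 0$. Item~(3) is identical with the roles of $\EE$ and $\HH$ swapped and a sign change: from $\uu\in\XX\ma(\Omega)$ solving \eqref{1E1m} with $\Div\uu=0$, set $\HH=\uu$, $\EE=-\frac{1}{ik}\rot\HH$, and check the four lines of \eqref{0E2} the same way, the boundary condition $\HH\cdot\nn=0$ coming from $\uu\in\HH_0(\Div,\Omega)$ and $\EE\times\nn=0$ from the natural boundary condition of the magnetic formulation (equivalently $\rot\HH\times\nn=0$).

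The only genuinely non-routine point — the "hard part," though it is more a matter of care than of difficulty — is the treatment of the boundary conditions that are \emph{not} imposed but must be \emph{recovered}: that $\HH\cdot\nn=0$ follows from $\EE\times\nn=0$ in item~(2), and dually that $\EE\times\nn=0$ follows from the natural condition in the magnetic formulation in item~(3). These rest on the trace-space mapping properties $\rot\colon\HH_0(\rot,\Omega)\to\HH_0(\Div,\Omega)$ and the fact that the variational formulation in $\XX\ma(\Omega)$ encodes $\rot\uu\times\nn=0$ as a natural condition; both are standard for Lipschitz $\Omega$ and are already part of the framework cited from \cite{AmroucheBernardiDaugeGirault98} and \cite{CoDa1999M2AS}. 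Everything else is a direct substitution, so I would keep the written proof short, citing those two references for the functional-analytic background and spelling out only the four defining equations of \eqref{0E2} in each direction.
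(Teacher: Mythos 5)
Your verification is correct, and it fills in exactly what the paper leaves implicit: the paper states Lemma \ref{1L1} without a written proof, relying on the citation of \cite[Theorem 1.1]{CoDa1999M2AS} and the remark that the gauge conditions force case a), which is precisely the content of your direct computation. The only point worth polishing is the identity behind $\HH\cdot\nn=0$ in item (2) — the normal trace of $\rot\EE$ equals the surface divergence of $\EE\times\nn$ (equivalently, $\rot$ maps $\HH_0(\rot,\Omega)$ into $\HH_0(\Div,\Omega)$) — which you correctly identify as the one non-automatic step.
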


\subsection{Product domain}

Let $\Omega\subset\R^3$ be of product form $\omega\times I$, with $\omega\subset\R^2$ and an interval $I$.
We denote Cartesian coordinates and component of vectors as
\[
   x = (x_1,x_2,x_3) = (\xp,x_3)\quad\mbox{and}\quad \uu = (u_1,u_2,u_3) = (\up,u_3).
\]
Likewise, the exterior unit normal $\nn$ to $\partial\Omega$ is written $(\np,n_3)$. The boundary of $\Omega$ is
\[
   \partial\Omega = (\partial\omega\times \overline I) \ \cup \ (\overline\omega \times \partial I).
\]
On $\omega\times\partial I$, $\np=0$ and $n_3=\pm1$. On $\partial\omega\times I$, $\np$ is the exterior unit normal to $\partial\omega$, $n_3=0$, and the tangential component of $\up$ is $\up\times\np=u_1n_2-u_2n_1$.
The {\em electric boundary conditions $\uu\times\nn=0$ on $\partial\Omega$} are equivalent to
\begin{equation}
\label{1E4}
\begin{split}
   &\up\times\np = 0 \quad\mbox{and}\quad u_3=0 \quad\mbox{on}\quad\partial\omega\times I,
   \\
   &\up=0  \quad\mbox{on}\quad\omega\times\partial I,
\end{split}
\end{equation}

The gradient and the Laplacian in the transverse plane containing $\omega$ are denoted by $\grap$ and $\Deltap$:
\[
   \grap v = \begin{pmatrix}
      \partial_1 v \\ 
      \partial_2 v \\ 
      \end{pmatrix} \quad\mbox{and}\quad 
    \Deltap v =  \partial^2_1 v + \partial^2_2 v.
\]
The vector and scalar curls in 2D are given by:
\[
   \rotv v = \begin{pmatrix}
      \partial_2 v \\ 
       - \partial_1 v \\ 
      \end{pmatrix} \quad\mbox{and}\quad 
    \rots \vv =  \partial_1 v_2 - \partial_2 v_1.
\]
We have the formula
\begin{equation}
\label{1E3}
   \rot\uu = \begin{pmatrix}
       \rotv u_3 \\ \rots \up
       \end{pmatrix} +
       \partial_3 \begin{pmatrix}
       -u_2 \\ u_1 \\ 0
       \end{pmatrix}.
\end{equation}

\subsection{The $\MM$, $\NN$ ansatz and the TE or TM polarizations}

The interior partial differential equation satisfied by eigenpairs is the system:
\begin{equation}
\label{1E1b}
   \rot\rot\uu = k^2\,\uu \quad\mbox{and}\quad \Div\uu=0\quad\mbox{in}\quad\Omega.
\end{equation}
There is a well known ansatz to solve these equations, called vector wave functions $\MM$ and $\NN$. They depend on the choice of a unit \emph{piloting vector} $\hat\cc$, and then $\MM$ and $\NN$ are generated by scalar potentials $q=q(x)$ according to
\begin{equation}
\label{eq:MN}
   \MM[q] = \rot(q\,\hat\cc) \quad\mbox{and}\quad \NN[q] = \rot\MM[q] = \rot\rot(q\,\hat\cc).
\end{equation}
In a slightly modified form where one takes $\hat\cc=\frac{x}{|x|}$, the ansatz $\MM$ and $\NN$ are the corner stone for the construction of spherical wave functions, cf.\ section \ref{sec:sph}.

For our study, we choose
\begin{equation}
\label{eq:pilot}
   \hat\cc = \ee_3 = \begin{pmatrix}
   {\:0\:} \\ 0\\ 1
\end{pmatrix}.
\end{equation}
Direct calculations yield:
\begin{lemma}
Let $q\in H^1(\Omega)$ and set $\MM[q] = \rot(q\,\ee_3)$. Then
\begin{equation}
\label{eq:Mrot}
   \MM[q] = \begin{pmatrix}
         \rotv q \\ 0
         \end{pmatrix}
   \quad\mbox{and}\quad
   \rot\MM[q] = \nabla(\partial_3q) - \Delta q\;\ee_3.
\end{equation}
With $\NN[q]=\rot\MM[q]$, we have
\begin{equation}
\label{eq:Nrot}
   \NN[q] = \nabla(\partial_3q) - \Delta q\;\ee_3
   \quad\mbox{and}\quad
   \rot\NN[q] = -\MM[\Delta q].
\end{equation}
\end{lemma}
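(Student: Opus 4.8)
Both identities are obtained by direct differentiation, read in the sense of distributions on $\Omega$: each formula involves at most second-order derivatives of $q\in H^1(\Omega)$, so the right-hand sides a priori live in $H^{-1}(\Omega)$, and no regularity beyond $q\in H^1(\Omega)$ is needed. First I would compute $\MM[q]$. Since $\hat\cc=\ee_3$ by \eqref{eq:pilot}, we have $q\,\ee_3=(0,0,q)$, and inserting this into the definition of the three-dimensional $\rot$ gives at once $\rot(q\,\ee_3)=(\partial_2 q,-\partial_1 q,0)$, which is exactly the field $\begin{pmatrix}\rotv q\\ 0\end{pmatrix}$ in the notation of Section~\ref{S2}. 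This is the first identity in \eqref{eq:Mrot}.

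Next I would apply formula \eqref{1E3} to the field $\uu=\MM[q]$, whose transverse part is $\up=\rotv q=(\partial_2 q,-\partial_1 q)$ and whose third component is $u_3=0$. The term $\rotv u_3$ vanishes; the scalar curl of the transverse part is $\rots(\rotv q)=\partial_1(-\partial_1 q)-\partial_2(\partial_2 q)=-\Deltap q$; and the last term of \eqref{1E3} equals $\partial_3(\partial_1 q,\partial_2 q,0)$, i.e.\ $\grap(\partial_3 q)$ completed by a zero third component. Hence $\rot\MM[q]=\grap(\partial_3 q)-\Deltap q\;\ee_3$. Adding and subtracting $\partial_3^2 q$ in the third slot and using $\Delta=\Deltap+\partial_3^2$ rewrites this as $\nabla(\partial_3 q)-\Delta q\;\ee_3$, which is the second identity in \eqref{eq:Mrot}; and since $\NN[q]:=\rot\MM[q]$ by \eqref{eq:MN}, it is simultaneously the first identity in \eqref{eq:Nrot}.

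Finally, for $\rot\NN[q]$ I would exploit the linearity of the curl together with $\rot\nabla\equiv0$: from $\NN[q]=\nabla(\partial_3 q)-\Delta q\;\ee_3$ we get $\rot\NN[q]=-\rot(\Delta q\,\ee_3)$, and the right-hand side is by the very definition \eqref{eq:MN} of $\MM$ (with piloting vector $\ee_3$) nothing but $-\MM[\Delta q]$. This closes the verification. There is no real obstacle here — the content is a one-line computation each time; the only point worth a remark is the distributional reading mentioned above, which is why the hypothesis is merely $q\in H^1(\Omega)$.
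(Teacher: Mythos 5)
Your computation is correct and is precisely the ``direct calculation'' the paper invokes (it states the lemma with no written proof beyond that phrase): expand $\rot(q\,\ee_3)$ componentwise, apply the splitting \eqref{1E3} to $\MM[q]$, and use $\rot\nabla\equiv0$ for the last identity. Nothing to add.
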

The form of $\MM$ and $\rot\NN$ with their third component zero explains why $\MM$, when describing an electric field, represents the TE (transverse electric) polarization, and $\NN$, the TM (transverse magnetic) polarization. For the description of a magnetic field, the converse happens: $\MM$ is TM and $\NN$ is TE.

As a consequence, we find that
\begin{equation}
\label{eq:MNH}
\begin{gathered}
   (\rot\rot-k^2)\MM[q] = -\MM[\Delta q+k^2q],\\
   (\rot\rot-k^2)\NN[q] = -\NN[\Delta q+k^2q].
\end{gathered}
\end{equation}
Thus, looking for solutions of \eqref{1E1b} amounts to considering $\MM[q]$ and $\NN[q]$ with $q$ solution of the Helmholtz equation $\Delta q + \kappa^2q=0$.

\section{Electric eigenmodes in a product domain}
\label{S3}

In this section, we look for solutions $(k^2,\EE)$ of the electric problem \eqref{1E1} with the gauge constraint $\Div\EE=0$. For this we use the $\MM$, $\NN$ ansatz, we find sufficient conditions on the potentials $q$, construct families of eigenpairs and prove that this system is complete.

\subsection{TE modes}

Let $\EE=\MM[q]$ be a TE mode. By construction $\Div\EE=0$. By \eqref{eq:MNH}, $q$ has to satisfy 
\begin{equation}
\label{eq:Hel}
   \Delta q + \kappa^2q=0.
\end{equation}
It remains to verify the electric boundary conditions $\EE\times\nn=0$ on $\partial\Omega$. Combining \eqref{1E4} and \eqref{eq:Mrot}, we find
\[
\begin{split}
   &\rotv q\times\np = 0  \quad\mbox{on}\quad\partial\omega\times I,
   \\
   &\rotv q=0  \quad\mbox{on}\quad\omega\times\partial I,
\end{split}
\]
which is equivalent to
\begin{equation}
\label{1E4E}
\begin{split}
   &\partial_n q = 0  \quad\mbox{on}\quad\partial\omega\times I,
   \\
   &\grap q=0  \quad\mbox{on}\quad\omega\times\partial I.
\end{split}
\end{equation}
Sufficient conditions for this are Dirichlet conditions on $\omega\times\partial I$ combined with Neumann conditions on $\partial\omega\times I$. This is a tensor product of a Neumann problem on $\omega$ and a Dirichlet problem on $I$. Along the same principle than for pure Dirichlet problem, cf Theorem \ref{th:1}, we find a spectral basis for $q$ in the form
\begin{equation}
\label{}
   q_{jm} = v\neu_j\otimes w\dir_m,\quad k^2=\lambda\neu_j+\mu\dir_m,
   \quad j\ge1,\ m\ge1.
\end{equation}
Here $j=0$ (corresponding to $v\neu_0=1$) is discarded because functions $q$ independent of $\xp$ give $\MM[q]=0$.

Thus we have found the following families of TE modes:

\begin{lemma}
\label{2L1}
For all $j\ge1$, $m\ge1$, the field $\EE\TE_{jm}:=\MM[v\neu_j\otimes w\dir_m]$, i.e.
\begin{equation}
\label{TEl}
   \EE\TE_{jm}(\xp,x_3) = \begin{pmatrix}
         \rotv v\neu_j(\xp) \\ 0
         \end{pmatrix} w\dir_m(x_3),
\end{equation}
is a TE mode for problem \eqref{1E1} associated with the eigenvalue 
$\Lambda\TE_{jm} = \lambda\neu_j+\mu\dir_m$.
\end{lemma}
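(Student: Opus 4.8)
The plan is to verify, one by one, the properties bundled into the phrase ``TE mode'': that $\EE\TE_{jm}$ is a nonzero element of $\XX\el(\Omega)$, that it is transverse electric (vanishing third component), that it is divergence free, and that it solves the variational problem \eqref{1E1} with $\Lambda=\lambda\neu_j+\mu\dir_m$. Almost every ingredient is already available from the preceding subsections, so the proof amounts to assembling them plus one integration by parts.

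First I would set $q:=v\neu_j\otimes w\dir_m$. Since $w\dir_m\in H^1_0(I)$ is a one-dimensional Dirichlet eigenfunction it is $C^\infty$ on $\overline I$, and $v\neu_j\in H^1(\omega)$, so $q\in H^1(\Omega)$ with ample $x_3$-regularity; the transverse derivatives below are read distributionally. By \eqref{eq:Mrot}, $\EE\TE_{jm}=\MM[q]$ is exactly $(\rotv v\neu_j)\otimes w\dir_m$ completed by a zero third component, hence TE; it lies in $L^2(\Omega)^3$ because $\rotv v\neu_j\in L^2(\omega)^2$; it is divergence free because $\Div\rot=0$; and it is nonzero because for $j\ge1$ the Neumann eigenfunction $v\neu_j$ is $L^2(\omega)$-orthogonal to constants, hence nonconstant, so $\rotv v\neu_j\not\equiv0$.

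Next I would read off the eigenvalue and the boundary conditions. Combining $-\Deltap v\neu_j=\lambda\neu_j v\neu_j$, $-\partial_3^2 w\dir_m=\mu\dir_m w\dir_m$ and $\Delta=\Deltap+\partial_3^2$ gives $\Delta q=-\Lambda q$ with $\Lambda:=\lambda\neu_j+\mu\dir_m$, so the first line of \eqref{eq:MNH} yields $(\rot\rot-\Lambda)\EE\TE_{jm}=-\MM[0]=0$, i.e. $\rot\rot\EE\TE_{jm}=\Lambda\,\EE\TE_{jm}$; in particular $\rot\EE\TE_{jm}\in\HH(\rot,\Omega)$. For the boundary condition, the computation carried out just before \eqref{1E4E} already reduces $\EE\TE_{jm}\times\nn=0$ on $\partial\Omega$ to $\partial_n q=0$ on $\partial\omega\times I$ and $\grap q=0$ on $\omega\times\partial I$; the first holds because $v\neu_j$ satisfies the Neumann condition on $\partial\omega$, the second because $w\dir_m$ vanishes on $\partial I$. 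Hence $\EE\TE_{jm}\in\HH_0(\rot,\Omega)$, and with $\Div\EE\TE_{jm}=0\in L^2(\Omega)$ this gives $\EE\TE_{jm}\in\XX\el(\Omega)$.

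Finally I would check \eqref{1E1} for an arbitrary $\vv\in\XX\el(\Omega)$: the $s\Div\uu\Div\vv$ term drops because $\Div\EE\TE_{jm}=0$, and the Green formula for the curl applied to $\rot\EE\TE_{jm}\in\HH(\rot,\Omega)$ and $\vv$ gives
$$
   \int_\Omega\rot\EE\TE_{jm}\cdot\rot\vv\,\dr\xx
   =\int_\Omega\rot\rot\EE\TE_{jm}\cdot\vv\,\dr\xx
   +\big\langle\nn\times\rot\EE\TE_{jm},\,\vv\big\rangle_{\partial\Omega},
$$
where the boundary pairing vanishes because it only sees the tangential trace of $\vv$, which is zero for $\vv\in\XX\el(\Omega)\subset\HH_0(\rot,\Omega)$; substituting $\rot\rot\EE\TE_{jm}=\Lambda\EE\TE_{jm}$ closes the identity, so $(\lambda\neu_j+\mu\dir_m,\EE\TE_{jm})$ solves \eqref{1E1}. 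The hard part here is not any single step but the functional-analytic bookkeeping — confirming membership in $\XX\el(\Omega)$ and justifying the lone integration by parts at minimal regularity; the key observation making the latter clean is that $\rot\rot\EE\TE_{jm}=\Lambda\EE\TE_{jm}\in L^2(\Omega)$, so that $\rot\EE\TE_{jm}\in\HH(\rot,\Omega)$ and the standard curl Green formula against $\HH_0(\rot,\Omega)$ test fields applies, while the regularity needed to legitimate \eqref{eq:Mrot}--\eqref{eq:MNH} for this $q$ comes for free from the tensor structure since the axial factor $w\dir_m$ is $C^\infty$.
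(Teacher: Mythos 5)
Your proposal is correct and follows essentially the same route as the paper: the lemma there is the summary of the computation carried out just before it (the reduction of $\EE\times\nn=0$ to \eqref{1E4E}, the Helmholtz equation via \eqref{eq:MNH}, and the tensor choice $q=v\neu_j\otimes w\dir_m$), all of which you reuse. The only addition is your explicit verification of the variational identity \eqref{1E1} and of membership in $\XX\el(\Omega)$, which the paper leaves implicit through the equivalence of strong and variational formulations set up in Section \ref{se:Max}.
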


\subsection{TM modes}

Let $\EE=\NN[q]$ be a TM mode. Again, $\Div\EE=0$, $q$ has to satisfy \eqref{eq:Hel}, and it remains to verify the electric boundary conditions $\EE\times\nn=0$ on $\partial\Omega$: Using \eqref{eq:Mrot}, we find that 
\[
   \EE_\perp = \grap(\partial_3q)\quad\mbox{and}\quad
   E_3 = -\Deltap q
\]
Hence, with \eqref{1E4} 
\[
\begin{split}
   &\grap(\partial_3q)\times\np = 0  \quad\mbox{and}\quad \Deltap q=0
   \quad\mbox{on}\quad\partial\omega\times I,
   \\
   &\grap(\partial_3q)=0  \quad\mbox{on}\quad\omega\times\partial I.
\end{split}
\]
We obtain sufficient conditions through the separation of variable ansatz
\[
   q(x) = v(\xp)\,w(x_3)
\]
with
\begin{equation}
\label{2E7}
   -\Deltap v = \lambda v \ \mbox{ in }\ \omega\quad\mbox{and}\quad
   -\partial^2_3 w = \mu w  \ \mbox{ in }\ I\quad\mbox{with}\quad
   \lambda+\mu=k^2=\Lambda,
\end{equation}
and the boundary conditions become
\[
\left\{ \begin{array}{l@{\qquad}l}
   (\np\times\grap) v(\xp) \:\partial_3w(x_3)= 0  
    &\forall\xp\in\partial\omega,\ \forall x_3\in I,
   \\
   \Deltap v(\xp)\: w(x_3)=0
   &\forall\xp\in\partial\omega,\ \forall x_3\in I,
   \\
   \grap v(\xp) \partial_3w(x_3)=0  &\forall\xp\in\omega,\ \forall x_3\in\partial I,
\end{array}\right.
\]
which yields, with $\partial_d\omega$, $d=1,\ldots,D$, the connected components of $\partial\omega$,
\begin{equation}
\label{2E8}
\left\{ \begin{array}{ll}
   v = {\rm const.} \ \ \mbox{on each} \ \ \partial_d\omega
   &\quad\mbox{or}\quad \partial_3w\equiv0  \ \ \mbox{in} \ \ I, \\[0.3ex]
   \Deltap v=0 \ \ \mbox{on}\ \  \partial\omega
   &\quad\mbox{or}\quad w\equiv0  \ \ \mbox{in} \ \ I,\\[0.3ex]
   \grap v \equiv0  \ \ \mbox{in} \ \ \omega
   &\quad\mbox{or}\quad \partial_3 w=0 \ \ \mbox{on} \ \ \partial I.
\end{array}\right.
\end{equation}
The conditions $\grap v \equiv0$ and $w\equiv0$ have to be discarded since they imply $\EE\equiv0$. Therefore we should have $\partial_3 w=0$ on $\partial I$ and $\Deltap v=0$ on $\partial\omega$. The latter condition implies that $v=0$ on $\partial\omega$ in the case when  $\lambda\neq0$. When $\lambda=0$, the condition $v = {\rm const.}$ on each $\partial_d\omega$ is sufficient. Thus we have shown that \eqref{2E7}-\eqref{2E8} can be summarized as follows:
Either
\begin{equation}
\label{2E9}
\left\{ \begin{array}{ll}
   -\Deltap v = \lambda v  \ \ \mbox{in} \ \ \omega \ 
   &\mbox{and}\quad v=0\ \ \mbox{on} \ \ \partial\omega\\[0.8ex]
   -\partial^2_3 w = \mu w  \ \ \mbox{in} \ \ I 
   &\mbox{and}\quad  \partial_3w=0 \ \ \mbox{on} \ \ \partial I\\
\end{array}\right. 
   \quad\mbox{with}\quad\lambda\neq0,\ \ \lambda+\mu=\Lambda,
\end{equation}  
or
\begin{equation}
\label{2E9b}
\left\{ \begin{array}{ll}
   -\Deltap v = 0  \ \ \mbox{in} \ \ \omega \ 
   &\mbox{and}\quad v={\rm const}\ \ \mbox{on each} \ \ \partial_d\omega\\[0.8ex]
   -\partial^2_3 w = \mu w  \ \ \mbox{in} \ \ I 
   &\mbox{and}\quad  \partial_3w=0 \ \ \mbox{on} \ \ \partial I\\
\end{array}\right. \quad\mbox{with}\quad \mu=\Lambda.
\end{equation}  
Hence we have found the following two families of TM modes. First, the standard one:

\begin{lemma}
\label{2L2}
For all $j\ge1$, $m\ge0$, the field $\EE\TM_{jm} := \NN[v\dir_j\otimes w\neu_m]$, i.e.
\begin{equation}
\label{TMl}
   \EE\TM_{jm}(\xp,x_3) = \begin{pmatrix}
         \grap v\dir_j(\xp) \\ 0
         \end{pmatrix} \partial_3 w\neu_m(x_3) -
         \begin{pmatrix}
         0 \\ \Deltap v\dir_j(\xp)
         \end{pmatrix} w\neu_m(x_3),
\end{equation}
is a TM mode for problem \eqref{1E1} associated with the eigenvalue 
$\Lambda\TM_{jm} = \lambda\dir_j+\mu\neu_m$.
\end{lemma}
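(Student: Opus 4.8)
The plan is to verify directly that the field $\EE\TM_{jm}$ defined in \eqref{TMl} satisfies all requirements of being a TM mode for \eqref{1E1} with the claimed eigenvalue, following exactly the scheme set up before the statement. Concretely, one sets $q_{jm} = v\dir_j\otimes w\neu_m$ and $\EE\TM_{jm} = \NN[q_{jm}]$; the formula \eqref{TMl} is then just the explicit form of $\NN[q]$ obtained from \eqref{eq:Nrot} together with the decomposition $\nabla = (\grap,\partial_3)$ and $\Delta q = \Deltap v\,w + v\,\partial_3^2 w = -\lambda\dir_j v\dir_j w\neu_m - \mu\neu_m v\dir_j w\neu_m$. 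So three things must be checked: (i) $\Div\EE\TM_{jm}=0$; (ii) $\EE\TM_{jm}$ solves the interior equation $\rot\rot\EE = \Lambda\TM_{jm}\EE$; (iii) $\EE\TM_{jm}$ belongs to $\XX\el(\Omega)$, i.e.\ it satisfies the electric boundary conditions \eqref{1E4}.

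First I would dispatch (i) and (ii) by invoking the structural identities already proved. By construction $\NN[q] = \rot\MM[q]$, so $\Div\NN[q] = \Div\rot(\cdot) = 0$ automatically, giving (i). For (ii), formula \eqref{eq:MNH} gives $(\rot\rot - \kappa^2)\NN[q] = -\NN[\Delta q + \kappa^2 q]$, and since $q_{jm}$ is a product of eigenfunctions with $\Delta q_{jm} = -(\lambda\dir_j+\mu\neu_m)q_{jm}$, the Helmholtz equation \eqref{eq:Hel} holds with $\kappa^2 = \lambda\dir_j+\mu\neu_m = \Lambda\TM_{jm}$, hence $(\rot\rot-\Lambda\TM_{jm})\EE\TM_{jm}=0$ in $\Omega$. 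This is the routine part and essentially already done in the text; the only care needed is to note $\EE\TM_{jm}\not\equiv 0$, which holds because $\grap v\dir_j\not\equiv 0$ (a Dirichlet eigenfunction is nonconstant), and $w\neu_m$ is either constant nonzero ($m=0$) or a genuine Neumann eigenfunction.

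The main work — and the intended obstacle — is (iii), the boundary conditions, and this is precisely what the computation \eqref{2E7}--\eqref{2E9} in the excerpt establishes. I would argue as follows. On $\omega\times\partial I$ one needs $\EE_\perp=\grap(\partial_3 q)=\grap v\dir_j\,\partial_3 w\neu_m = 0$; since $w\neu_m$ satisfies the Neumann condition $\partial_3 w\neu_m=0$ on $\partial I$, this holds. On $\partial\omega\times I$ one needs $E_3 = -\Deltap q = \lambda\dir_j v\dir_j w\neu_m = 0$ and $\EE_\perp\times\np = (\np\times\grap)v\dir_j\,\partial_3 w\neu_m = 0$; since $v\dir_j = 0$ on $\partial\omega$ (Dirichlet), both the vanishing of $v\dir_j$ on $\partial\omega$ and the vanishing of its tangential derivative $\np\times\grap v\dir_j = \partial_\tau v\dir_j$ on $\partial\omega$ follow. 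This handles the boundary conditions for all $m\ge 0$, including the degenerate case $m=0$ where $w\neu_0\equiv 1$ and $\partial_3 w\neu_0\equiv 0$, so that only the second summand in \eqref{TMl} survives, $\EE\TM_{j0} = -(0,\Deltap v\dir_j)^{\!\top} = (0,\lambda\dir_j v\dir_j)^{\!\top}$, which still meets \eqref{1E4} because $v\dir_j$ vanishes on $\partial\omega$. One should also check the $H^1$/$L^2$ regularity needed for membership in $\HH_0(\rot,\Omega)\cap\HH(\Div,\Omega)$; this is immediate from $v\dir_j\in H^2_{\mathrm{loc}}$ being a Laplace eigenfunction on the Lipschitz domain $\omega$ and $w\neu_m$ smooth on the interval, together with $\Deltap v\dir_j = -\lambda\dir_j v\dir_j\in L^2(\omega)$. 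Finally, to confirm that $\EE\TM_{jm}$ genuinely solves the variational problem \eqref{1E1} rather than only the strong form, one integrates by parts and uses the boundary conditions just verified to discard the boundary terms; this is routine once (iii) is in hand. I expect no real difficulty beyond organizing these verifications; the delicate point, already absorbed into the pre-statement discussion, is recognizing that the apparently asymmetric choice of Dirichlet data in $\xp$ and Neumann data in $x_3$ is forced by the electric boundary conditions via \eqref{2E8}--\eqref{2E9}.
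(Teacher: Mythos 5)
Your proposal is correct and follows essentially the same route as the paper: the paper's justification of Lemma \ref{2L2} is precisely the computation in the TM subsection leading from the boundary conditions \eqref{1E4} through \eqref{2E7}--\eqref{2E9}, i.e.\ the identification $\EE_\perp=\grap(\partial_3q)$, $E_3=-\Deltap q$ and the resulting Dirichlet-in-$\xp$/Neumann-in-$x_3$ pairing, which you reproduce by direct verification (including the $m=0$ case and the nonvanishing of the field).
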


The second family appears if $\omega$ has a non trivial topology (i.e.\ if $D\ge2$), and shares the features of TE and TM polarization (vanishing third component of the electric and magnetic fields):

\begin{lemma}
\label{2L2TEM}
There exist $D$ linearly independent harmonic potentials $v\too_d$ that have constant traces on each connected component 
$\partial_d\omega$ of $\partial\omega$. They can be chosen such that $v\too_D$ is constant in $\omega$. If $\partial\omega$ has more than one connected component, then the $v\too_d$, $d=1,\ldots,D-1$, have linearly independent gradients, and they generate the family of TEM modes defined for all $d=1,\ldots,D-1$ and $m\ge1$ as the fields $\EE\TEM_{dm} := \NN[v\too_d\otimes w\neu_m]$ which can also be written as
\begin{equation}
\label{TEM}
   \EE\TEM_{dm}(\xp,x_3) = \begin{pmatrix}
         \grap v\too_d(\xp) \\ 0
         \end{pmatrix}  w\dir_m(x_3),
\end{equation}
and is associated with the eigenvalue $\Lambda\TEM_{dm} = \mu\dir_m$.
\end{lemma}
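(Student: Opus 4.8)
The plan is to realise the potentials $v\too_d$ as the \emph{harmonic measures} of the boundary components of $\omega$ and then to read the TEM fields off the $\MM$, $\NN$ calculus of the preceding subsection. For $d=1,\dots,D$ I would take $v\too_d\in H^1(\omega)$ to be the unique harmonic function in $\omega$ with trace $\delta_{dd'}$ on $\partial_{d'}\omega$; this Dirichlet problem is well posed by Lax--Milgram since $\omega$ is a bounded Lipschitz domain, and each $v\too_d$ has constant trace on every component of $\partial\omega$. The $v\too_d$ are linearly independent because their boundary traces are, and $\sum_{d=1}^{D}v\too_d$ has trace $\equiv1$, hence equals the constant $1$ by uniqueness; replacing $v\too_D$ by this constant (an invertible change of basis) furnishes the asserted family, with $v\too_D$ constant in $\omega$.

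Assume from now on that $D\ge2$. To prove that $\grap v\too_1,\dots,\grap v\too_{D-1}$ are linearly independent, suppose $\sum_{d=1}^{D-1}c_d\grap v\too_d\equiv0$ in $\omega$. Since $\omega$ is connected, $g:=\sum_{d=1}^{D-1}c_d v\too_d$ is a constant $\gamma$; taking its trace on $\partial_{d'}\omega$ for $1\le d'\le D-1$ gives $c_{d'}=\gamma$, while its trace on $\partial_D\omega$ gives $\gamma=0$, so all $c_d=0$. In particular each $v\too_d$ with $d\le D-1$ is non-constant, hence $\grap v\too_d\not\equiv0$, and since $w\dir_m\not\equiv0$ the fields constructed below are non-zero.

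Next I would identify $\EE\TEM_{dm}:=\NN[v\too_d\otimes w\neu_m]$ with the field \eqref{TEM}. By \eqref{eq:Nrot} and $\Deltap v\too_d=0$, the third component of $\NN[v\too_d\otimes w\neu_m]$ cancels, leaving the transverse part $\grap v\too_d\,\partial_3 w\neu_m$. On the interval $I$, $\partial_3 w\neu_m$ satisfies the Dirichlet eigenproblem for $-\partial_3^2$ with eigenvalue $\mu\neu_m$; since the Dirichlet and Neumann spectra of $-\partial_3^2$ on an interval coincide for $m\ge1$ (the case $m=0$ giving $\partial_3 w\neu_0=0$, which is discarded), $\partial_3 w\neu_m$ is, up to a non-zero scalar, the Dirichlet eigenfunction $w\dir_m$, and $\NN[v\too_d\otimes w\neu_m]$ is thus of the form \eqref{TEM} with $\Lambda\TEM_{dm}=\mu\neu_m=\mu\dir_m$.

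It then remains to check that $\EE\TEM_{dm}$ solves \eqref{1E1} with $\Lambda=\mu\dir_m$. First, $\Div\EE\TEM_{dm}=(\Deltap v\too_d)\,w\dir_m=0$, so the gauge constraint holds and the $s$-term in \eqref{1E1} drops out. Second, \eqref{eq:MNH} applied with $\Delta(v\too_d\otimes w\neu_m)=-\mu\neu_m\,v\too_d\otimes w\neu_m$ gives $\rot\rot\EE\TEM_{dm}=\mu\dir_m\,\EE\TEM_{dm}$ in $\Omega$. Third, the electric boundary conditions \eqref{1E4} hold: $w\dir_m=0$ on $\partial I$ handles $\omega\times\partial I$, while on $\partial\omega\times I$ the third component is zero and $\grap v\too_d\times\np$ equals, up to sign, the tangential derivative of $v\too_d$ along $\partial\omega$, which vanishes because $v\too_d$ is constant on each component $\partial_d\omega$. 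Hence $\EE\TEM_{dm}\in\XX\el(\Omega)$, and testing against $\vv\in\XX\el(\Omega)$ and integrating by parts --- the boundary term disappearing since $\nn\times\vv=0$ --- yields \eqref{1E1}. I expect the main obstacle to be twofold: the linear independence of the gradients, which needs the connectedness of $\omega$ together with the correct handling of the single relation $\sum_{d=1}^{D}\grap v\too_d=0$ among the $D$ harmonic measures; and the bookkeeping that converts the Neumann factor $w\neu_m$ into the Dirichlet factor $w\dir_m$ with eigenvalue $\mu\dir_m$. Everything else is a direct application of \eqref{eq:Nrot}, \eqref{eq:MNH} and the boundary analysis already carried out for the TM modes.
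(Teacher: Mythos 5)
Your proposal is correct and follows essentially the same route as the paper: the TEM fields arise as the degenerate $\lambda=0$ case of the TM ansatz $\NN[v\otimes w\neu_m]$, with the third component killed by harmonicity of $v\too_d$ and the axial factor $\partial_3 w\neu_m$ identified (up to normalization) with the Dirichlet eigenfunction $w\dir_m$ of the same eigenvalue, exactly as in the paper's remark following the lemma. The only material you add is the explicit realization of the $v\too_d$ as harmonic measures and the verification that their gradients are independent, details the paper leaves implicit; both are handled correctly.
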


Note that to obtain \eqref{TEM} we have used that the derivatives $\partial_3 w\neu_k$ for $k\ge1$ are an eigenvector basis for the Dirichlet problem on the interval $I$.

\begin{remark}
\label{2R1}
Let us borrow the following objects from \cite{AmroucheBernardiDaugeGirault98}: Let $\omega^\circ$ be $\omega\setminus\Sigma$, where $\Sigma=\cup_{d=1}^{D-1}\Sigma_d$ is a minimal set of cuts so that $\omega^\circ$ is simply connected. Then we can define the space $\Theta(\omega)$ as:
\[
   \Theta(\omega) = \{\varphi\in H^1(\omega^\circ)|\quad 
   \big[\varphi\big]_{\Sigma_d} = {\rm const}(d),\ \ d=1,\ldots, D-1\}.
\]
For $\varphi\in\Theta(\omega)$, its extended $\rotv$ denoted by $\rotw\varphi$ is defined as its $\rotv$ in $\omega^\circ$, considered as an element of $L^2(\omega)$. 
Then there exist ``conjugate'' potentials $\tilde v\too_d\in\Theta(\omega)$ such that for any $d\le D-1$, there holds
\begin{equation}
\label{2E10}
   \rotw\tilde v\too_d = \grap v\too_d.
\end{equation}
Therefore for all $m\ge1$, the mode $\EE\TEM_{dm}$ is also an extended TE mode. This is why it is called a TEM mode.
\end{remark}

\subsection{Completeness}
The aim of this section is to prove
\begin{lemma}
\label{2L3}
Let $\uu\in\XX\el(\Omega)$ such that $\Div\uu=0$. We assume that for all integers $j\ge1$ and $d\in[1,D-1]$
\[
   \langle \uu,\EE\TE_{jm}\rangle =0\; (\forall m\ge1), \quad
   \langle \uu,\EE\TM_{jm}\rangle =0\; (\forall m\ge0)  \;\;\mbox{and}\;\;
   \langle \uu,\EE\TEM_{dm}\rangle =0 \; (\forall m\ge1).
\]
Here $\langle \cdot,\cdot \rangle$ is the $L^2$ scalar product on $\Omega$. Then $\uu=0$.
\end{lemma}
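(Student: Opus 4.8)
\emph{Proof plan.} The plan is to separate the $x_3$ variable. Write $\uu=(\up,u_3)$. Since the electric boundary conditions \eqref{1E4} impose $\up=0$ on $\omega\times\partial I$ (with no constraint on $u_3$ there), it is natural to expand $\up$ in the Dirichlet basis $\{w\dir_m\}_{m\ge1}$ and $u_3$ in the Neumann basis $\{w\neu_m\}_{m\ge0}$, both of which are orthonormal bases of $L^2(I)$:
\[
   \up(\xp,x_3)=\sum_{m\ge1}\aa_m(\xp)\,w\dir_m(x_3),
   \qquad
   u_3(\xp,x_3)=\sum_{m\ge0}c_m(\xp)\,w\neu_m(x_3),
\]
with $\aa_m\in L^2(\omega)^2$, $c_m\in L^2(\omega)$, the series converging in $L^2$ together with the Parseval identities. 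I would also record the one‑dimensional facts: for $m\ge1$ the function $\partial_3 w\neu_m$ solves the Dirichlet problem on $I$ for $-\partial_3^2$ with eigenvalue $\mu\neu_m=\mu\dir_m$, hence $\partial_3 w\neu_m=\sigma_m\sqrt{\mu\dir_m}\,w\dir_m$, and symmetrically $\partial_3 w\dir_m=\tau_m\sqrt{\mu\dir_m}\,w\neu_m$, with signs $\sigma_m,\tau_m\in\{\pm1\}$ satisfying $\sigma_m\tau_m=-1$; moreover $\partial_3 w\neu_0=0$.

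Next I would translate the three families of orthogonality relations, inserting the expansions into the $L^2(\Omega)$ scalar products, using the explicit formulas \eqref{TEl}, \eqref{TMl}, \eqref{TEM}, and integrating out $x_3$. The TE and TEM relations should reduce, for every $m\ge1$, to
\[
   \int_\omega\aa_m\cdot\rotv v\neu_j\,\dr\xp=0\ \ (\forall j\ge1)
   \qquad\text{and}\qquad
   \int_\omega\aa_m\cdot\grap v\too_d\,\dr\xp=0\ \ (\forall\,1\le d\le D-1).
\]
Using $\Deltap v\dir_j=-\lambda\dir_j v\dir_j$, the TM relation with $m=0$ becomes $\lambda\dir_j\int_\omega c_0\,v\dir_j\,\dr\xp=0$ for all $j$, whence $c_0=0$ because $\{v\dir_j\}_{j\ge1}$ is an orthonormal basis of $L^2(\omega)$; and the TM relation with $m\ge1$ becomes, with the notation $\beta_{jm}:=\int_\omega\aa_m\cdot\grap v\dir_j\,\dr\xp$,
\[
   \sigma_m\sqrt{\mu\dir_m}\;\beta_{jm}+\lambda\dir_j\!\int_\omega c_m\,v\dir_j\,\dr\xp=0
   \qquad(\forall j\ge1).
\]

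Then I would invoke the $L^2$‑orthogonal Helmholtz--Hodge decomposition of the planar Lipschitz domain,
\[
   L^2(\omega)^2=\grap H^1_0(\omega)\ \oplus\ \rotv H^1(\omega)\ \oplus\ \mathcal H(\omega),
\]
in which the first two summands are closed (Poincaré and Poincaré--Wirtinger) and $\mathcal H(\omega)=\{\mathbf h\in L^2(\omega)^2:\Divp\mathbf h=0,\ \rots\mathbf h=0,\ \mathbf h\times\np=0\text{ on }\partial\omega\}$ is $(D-1)$‑dimensional and is exactly spanned by the fields $\grap v\too_d$, $1\le d\le D-1$, of Lemma~\ref{2L2TEM}. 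Since $\{v\neu_j\}_{j\ge0}$ is a spectral basis of $H^1(\omega)$, the family $\{\rotv v\neu_j\}_{j\ge1}$ spans $\rotv H^1(\omega)$; together with the two relations above, this forces $\aa_m\in\grap H^1_0(\omega)$ for every $m\ge1$, say $\aa_m=\grap\phi_m$ with $\phi_m\in H^1_0(\omega)$. Because $\{\grap v\dir_j\}_{j\ge1}$ is an orthogonal basis of $\grap H^1_0(\omega)$, it now suffices to prove $\beta_{jm}=0$ for all $j,m$. Here I would bring in the constraint $\Div\uu=0$: tested against $\varphi\otimes w\dir_m$ with $\varphi\in\mathcal D(\omega)$ (a legitimate element of $H^1_0(\Omega)$) and using $\partial_3 w\dir_m=\tau_m\sqrt{\mu\dir_m}\,w\neu_m$, it gives $\Deltap\phi_m=\Divp\aa_m=\tau_m\sqrt{\mu\dir_m}\,c_m$ in $\mathcal D'(\omega)$; pairing with $v\dir_j\in H^1_0(\omega)$ yields $-\beta_{jm}=\tau_m\sqrt{\mu\dir_m}\int_\omega c_m v\dir_j\,\dr\xp$. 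Eliminating $\int_\omega c_m v\dir_j\,\dr\xp$ between this identity and the TM relation above, and using $\sigma_m\tau_m=-1$, one is left with $\sigma_m(\lambda\dir_j+\mu\dir_m)\,\beta_{jm}=0$. Since $\lambda\dir_j>0$ and $\mu\dir_m>0$, all $\beta_{jm}$ vanish; hence $\aa_m=0$ for $m\ge1$, the TM relation then gives $\int_\omega c_m v\dir_j\,\dr\xp=0$ for all $j$ so that $c_m=0$ for $m\ge1$, and with $c_0=0$ we conclude $\uu=0$.

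I expect the main obstacle to be the planar Helmholtz--Hodge decomposition and, in particular, the identification of $\mathcal H(\omega)$ with the $(D-1)$‑dimensional span of the $\grap v\too_d$ on an arbitrary bounded Lipschitz domain $\omega$ with $D$ boundary components: this is exactly where the topology of $\omega$ enters, and I would handle it with the tools of \cite{AmroucheBernardiDaugeGirault98}. The remaining points --- convergence of the $x_3$‑expansions, the distributional identity $\Divp\aa_m=\tau_m\sqrt{\mu\dir_m}\,c_m$, and the sign bookkeeping relating $w\dir_m$, $w\neu_m$ and their derivatives --- are routine but should be carried out with some care.
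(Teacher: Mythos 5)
Your argument is correct, but it is organized quite differently from the paper's. The paper eliminates $u_3$ first: it integrates the TM orthogonality by parts \emph{in three dimensions}, uses $\Div\uu=0$ to trade $\Divp\up$ for $-\partial_3u_3$, and lands directly on $(\lambda\dir_j+\mu\neu_m)\int u_3\,v\dir_j w\neu_m=0$; it then treats $x_3$ as a parameter for the transverse part, showing from the TE relations that $\rots\up(\cdot,x_3)$ is a function $z(x_3)$ alone, from the TEM relations that the fluxes of $\up$ through each $\partial_d\omega$ vanish, hence that $\up=\rotv y$ for a global stream function $y$ with Neumann data, and finally $-\Deltap y=z(x_3)$ forces $z=0$ and $\up=0$. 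You instead Fourier-expand in $x_3$ at the outset and then invoke the planar Hodge decomposition $L^2(\omega)^2=\grap H^1_0(\omega)\oplus\rotv H^1(\omega)\oplus\mathcal H(\omega)$, with the TE and TEM relations placing each coefficient $\aa_m$ in $\grap H^1_0(\omega)$, and an algebraic elimination between the TM relations and the divergence constraint yielding $(\lambda\dir_j+\mu\dir_m)\beta_{jm}=0$. The two proofs consume the same topological input (your identification of $\mathcal H(\omega)$ with $\operatorname{span}\{\grap v\too_d\}_{d\le D-1}$ is exactly the paper's existence of a global stream function once all boundary fluxes vanish, both resting on \cite{AmroucheBernardiDaugeGirault98}), and the sign bookkeeping you flag does close correctly ($\sigma_m\tau_m=-1$ gives $(\lambda\dir_j+\mu\dir_m)\beta_{jm}=0$). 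What your route buys is transparency about where the eigenvalue sum $\lambda\dir_j+\mu\dir_m$ comes from, plus the minor bonus that the tangential boundary condition of $\up$ on $\partial\omega\times I$ is never used; what the paper's route buys is that it avoids stating the full three-term Hodge decomposition and the coefficient-level linear algebra, working instead with two clean scalar facts ($\rots\up$ constant in $\xp$, and a Neumann potential with $\Deltap$-image independent of $\xp$).
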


\begin{proof}
We first draw consequences from the orthogonality properties against the TM modes: We fix $j$ and $m$ and set $v=v\dir_j$, $w=w\neu_m$ and integrate by parts:
\[
   \begin{split}
   0 &= \int_I\int_\omega \up(\xp,x_3)\, \grap v(\xp) \partial_3 w(x_3) -
      u_3(\xp,x_3) \,\Deltap v(\xp) w(x_3) \; \dr \xp \dr x_3
\\
   &= \int_I\int_\omega -\Divp \up(\xp,x_3)\, v(\xp) \partial_3 w(x_3) -
      u_3(\xp,x_3) \,\Deltap v(\xp) w(x_3) \; \dr \xp \dr x_3
\\
   &= \int_I\int_\omega \partial_3u_3(\xp,x_3)\, v(\xp) \partial_3 w(x_3) -
      u_3(\xp,x_3) \,\Deltap v(\xp) w(x_3) \; \dr \xp \dr x_3
\\
   &= \int_I\int_\omega -u_3(\xp,x_3)\, v(\xp) \partial^2_3 w(x_3) -
      u_3(\xp,x_3) \,\Deltap v(\xp) w(x_3) \; \dr \xp \dr x_3.
\\
   \end{split}
\]
Here we have used that $\Div\uu=0$, replacing $\Divp\up$ by $-\partial_3u_3$. Coming back to the properties of $v=v\dir_j$ and $w=w\neu_m$ we find for all $j\ge1$ and $m\ge0$
\[
   \int_I\int_\omega u_3(\xp,x_3)\, (\lambda\dir_j+\mu\neu_m) v\dir_j(\xp)  w\neu_m(x_3)
      \; \dr \xp \dr x_3 = 0.
\] 
Since $\lambda\dir_j+\mu\neu_m$ is never $0$, we deduce that for all $j\ge1$ and $m\ge0$
\[
   \int_I\int_\omega u_3(\xp,x_3)\,  v\dir_j(\xp)  w\neu_m(x_3)
      \; \dr \xp \dr x_3 = 0.
\] 
The set $v\dir_j(\xp)  w\neu_m(x_3)$ being a complete basis in $L^2(\Omega)$, we deduce that $u_3=0$.

Next, we use the orthogonality against the TE modes: for all $j\ge1$ and $m\ge1$ there holds:
\[
   \int_I w\dir_m(x_3) \int_\omega 
   \up(\xp,x_3)\cdot \rotv v\neu_j(\xp) \; \dr \xp \dr x_3 = 0.
\]
Therefore, for all $j\ge1$:
\[
   \int_\omega \up(\xp,x_3)\cdot \rotv v\neu_j(\xp) \; \dr \xp = 0,\quad \mbox{for a.\ e. } x_3\in I.
\]
We deduce that $\rots\up(\cdot,x_3)$ is orthogonal to all $v\neu_j$ for $j\ge1$, which means that $\rots\up(\cdot,x_3)$ is constant with respect to $\xp$. There exists a function $z=z(x_3)$ such that
\[
   \rots\up(\xp,x_3) = z(x_3).
\leqno{(*)}\]
Since $\Div\uu=0$ and $u_3=0$, we have $\Divp\up=0$, which implies that locally $\up$ is a $\rotv$ of a scalar potential and that
\[
   \int_{\partial\omega} \up\cdot\np \;\dr\sigma = 0.
\]
Additionally, the orthogonality relations against the TEM modes yield for all $m\ge1$ and $d\le D-1$
\[
   \int_I w\dir_m(x_3) \int_\omega 
   \up(\xp,x_3)\cdot \grap v\too_d(\xp) \; \dr \xp \dr x_3 = 0.
\]
We deduce that 
\[
   \int_\omega 
   \up(\xp,x_3)\cdot \grap v\too_d(\xp) \; \dr \xp = 0,\quad \mbox{for a.\ e. }  x_3\in I, 
\]
from which we find that (we recall that $\partial_d\omega$ are the connected components of $\partial\omega$)
\[
   \int_{\partial_d\omega} \up\cdot\np \;\dr\sigma = 0, \quad d=1,\ldots,D.
\]
These are the flux conditions that provide the existence of a global scalar potential
 $y\in L^2(I,H^1(\omega))$ such that
\[
   \up(\xp,x_3) = \rotv y(\xp,x_3).
\]
As $\up(\cdot,x_3)$ satisfies the tangential boundary condition on $\partial\omega$ for a.e.\ $x_3\in I$, then $y(\cdot,x_3)$ satisfies in turn the Neumann boundary condition on $\partial\omega$ for a.e.\ $x_3\in I$.
With $(*)$ we find
\[
   -\Deltap y(\xp,x_3) = z(x_3).
\]
Since $y$ satisfies the homogeneous Neumann condition with respect to $\xp$, this implies that  $z(x_3)=0$ for all $x_3$. Finally we have obtained that $\up=0$.
\end{proof}

\subsection{Eigenmodes}
Summarizing, we have proved:
\begin{theorem}
\label{2T1}
Let $\Omega=\omega\times I$. The eigenpairs with zero divergence of the electric Maxwell operator \eqref{1E1} can be organized in the three families:\\
(i) \ $\di\EE\TE_{jm} = \begin{pmatrix}
         \rotv v\neu_j(\xp) \\ 0
         \end{pmatrix} w\dir_m(x_3)$
         \ with \  
         $\Lambda\TE_{jm} = \lambda\neu_j+\mu\dir_m,\ \ j\ge1, \ m\ge1,$\\
(ii) \ $\EE\TM_{jm} = \begin{pmatrix}
         \grap v\dir_j(\xp) \\ 0
         \end{pmatrix} \partial_3 w\neu_m(x_3) -
         \begin{pmatrix}
         0 \\ \Deltap v\dir_j(\xp)
         \end{pmatrix} w\neu_m(x_3)$\\[1ex] \mbox{}\hfill with 
         $\Lambda\TM_{jm} = \lambda\dir_j+\mu\neu_m, \ \ j\ge1, \ m\ge0,$\\[0.5ex]
(iii) and, if $\omega$ is not simply connected (i.e.\ $D\ge2$)\\
$\EE\TEM_{dm} = \begin{pmatrix}
         \grap v\too_d(\xp) \\ 0
         \end{pmatrix}  w\dir_m(x_3)$
         \ with \ 
         $\Lambda\TEM_{dm} = \mu\dir_m,\ \ 1\le d\le D-1, \ m\ge1$.\\[0.5ex]
See Notation \ref{not:1}, Lemmas \ref{2L2} and \ref{2L2TEM} for the notations of the 2D and 1D quantities.
All the associated eigenvalues $\Lambda\TE_{jm}$, $\Lambda\TM_{jm}$ and $\Lambda\TEM_{dm}$ are non-zero.
\end{theorem}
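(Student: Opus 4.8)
The plan is to assemble the preceding lemmas; concretely, three things must be checked: that each pair listed in (i)--(iii) is a divergence-free eigenpair of \eqref{1E1}, that the three families together enumerate \emph{all} divergence-free eigenpairs with their multiplicities, and that every eigenvalue is nonzero.

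\emph{Construction of the eigenpairs and sign of the eigenvalues.} The first point is exactly the content of Lemmas \ref{2L1}, \ref{2L2} and \ref{2L2TEM}: for the stated index ranges, $\EE\TE_{jm}$, $\EE\TM_{jm}$ and $\EE\TEM_{dm}$ are nonzero, divergence free, and solve \eqref{1E1} with the eigenvalues $\Lambda\TE_{jm}=\lambda\neu_j+\mu\dir_m$, $\Lambda\TM_{jm}=\lambda\dir_j+\mu\neu_m$ and $\Lambda\TEM_{dm}=\mu\dir_m$ (the degenerate cases $j=0$ for TE and $\grap v\equiv0$ or $w\equiv0$ for TM/TEM having been discarded in the derivation). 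As $\omega$ and $I$ are bounded, $-\Deltap$ on $H^1_0(\omega)$ and $-\partial_3^2$ on $H^1_0(I)$ are positive, so $\lambda\dir_j>0$ and $\mu\dir_m>0$ for all $j,m\ge1$, while the Neumann eigenvalues $\lambda\neu_j$ ($j\ge1$) and $\mu\neu_m$ ($m\ge0$) are $\ge0$; hence $\Lambda\TM_{jm}\ge\lambda\dir_1>0$, $\Lambda\TEM_{dm}\ge\mu\dir_1>0$ and $\Lambda\TE_{jm}\ge\mu\dir_1>0$, which is the last assertion of the theorem.

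\emph{Exhaustiveness.} By \cite[Theorem 1.1]{CoDa1999M2AS} (recalled before Lemma \ref{1L1}), for $s>0$ the spectrum of \eqref{1E1} splits into the $s$-independent, divergence-free ``Maxwell'' part and the gradient part; since $\Omega=\omega\times I$ is a bounded Lipschitz domain, $\XX\el(\Omega)$ embeds compactly in $L^2(\Omega)^3$, so the restriction to the closed subspace $Z:=\{\uu\in\XX\el(\Omega):\Div\uu=0\}$ of the operator associated with \eqref{1E1} has compact resolvent, and $Z$ carries an $L^2(\Omega)$-orthonormal basis of its eigenvectors. On the other hand, the integration-by-parts identities $\langle\rotv v\neu_j,\rotv v\neu_{j'}\rangle_\omega=\lambda\neu_{j'}\delta_{jj'}$ and $\langle\grap v\dir_j,\grap v\dir_{j'}\rangle_\omega=\lambda\dir_{j'}\delta_{jj'}$, together with the boundary conditions ($v\dir_j=0$ and $\partial_n v\neu_j=0$ on $\partial\omega$, $v\too_d$ constant on $\partial\omega$) and the orthonormality of the one-dimensional bases $(w\dir_m)_{m\ge1}$ and $(w\neu_m)_{m\ge0}$, show that the members of each family are pairwise $L^2(\Omega)$-orthogonal and that the three families are mutually orthogonal. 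Thus the union $\mathcal F$ of the three families is an orthogonal system of eigenvectors contained in $Z$, and Lemma \ref{2L3} says precisely that its orthogonal complement in $Z$ is trivial. Since eigenvectors for distinct eigenvalues of a self-adjoint operator are orthogonal, this forces $\mathcal F$, for every eigenvalue $\Lambda$, to already contain an orthogonal basis of the corresponding eigenspace $E_\Lambda\subset Z$: otherwise a nonzero vector of $E_\Lambda$ orthogonal to $\mathcal F\cap E_\Lambda$ would be orthogonal to all of $\mathcal F$, contradicting Lemma \ref{2L3}. Hence $\mathcal F$ enumerates all divergence-free eigenpairs of \eqref{1E1} with the correct multiplicities, which is the claim.

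\emph{Where the difficulty lies.} The orthogonality identities above are routine bookkeeping; the one non-mechanical step is turning the density statement of Lemma \ref{2L3} into exhaustiveness, which presupposes that the divergence-free part of the spectrum of \eqref{1E1} is discrete with an $L^2$-orthonormal eigenbasis. This is exactly what the regularization ($s>0$), the splitting of \cite[Theorem 1.1]{CoDa1999M2AS}, and the compact embedding $\XX\el(\Omega)\hookrightarrow L^2(\Omega)^3$ are used for; everything else has already been carried out in Lemmas \ref{2L1}--\ref{2L3}.
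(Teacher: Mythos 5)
Your proposal is correct and takes essentially the same route as the paper, which states Theorem \ref{2T1} simply as a summary of Lemmas \ref{2L1}, \ref{2L2}, \ref{2L2TEM} (construction of the three families) and Lemma \ref{2L3} (completeness); you merely make explicit the standard spectral glue (compact embedding of $\XX\el(\Omega)$, discreteness, orthogonal eigenbasis of the divergence-free part) that the paper leaves implicit. The only slight overstatement is the claim that each family is internally $L^2$-orthogonal --- for a fixed $m$ the TEM modes $\EE\TEM_{dm}$, $d=1,\dots,D-1$, are in general only linearly independent unless the $v\too_d$ are chosen gradient-orthogonal --- but your final contradiction argument only needs linear independence of $\mathcal F\cap E_\Lambda$, so nothing breaks.
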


\section{Magnetic eigenmodes in a product domain}
\label{S3M}
Since the magnetic field $\HH$ associated with the electric field $\EE$ is given by 
\[
   \HH=\frac1{ik}\; \rot\EE,\quad\mbox{for}\quad k = \pm\sqrt{\Lambda}
\]
for any non-zero eigenvalue $\Lambda$, we deduce:

\begin{corollary}
\label{2C1}
Under the conditions of Theorem \ref{2T1}, we set $k=\pm\sqrt{\Lambda}$. The associated magnetic fields are given by
\[
   \begin{split}
   &\HH\TE_{jm} = \frac{1}{ik\TE_{jm}} \left\{ \begin{pmatrix}
         \grap v\neu_j(\xp) \\ 0
         \end{pmatrix} \partial_3 w\dir_m(x_3) -
         \begin{pmatrix}
         0 \\ \Deltap v\neu_j(\xp)
         \end{pmatrix} w\dir_m(x_3) \right\}
         \quad j, m\ge1 ,\\
   &\HH\TM_{jm} = -ik\TM_{jm} \begin{pmatrix}
         \rotv v\dir_j(\xp) \\ 0
         \end{pmatrix} w\neu_m(x_3)
         \quad j\ge1, \ m\ge0 ,\\
   &\HH\TEM_{dm} = \frac{i}{k\TEM_{dm}} \begin{pmatrix}
         \rotv v\too_d(\xp) \\ 0
         \end{pmatrix}  \partial_3 w\dir_m(x_3)
         \quad 1\le d\le D-1, \ m\ge1 ,
   \end{split}
\]
and the triples $(k,\EE,\HH)$ are Maxwell eigenmodes.
\end{corollary}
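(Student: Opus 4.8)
The plan is to obtain the corollary almost for free from Theorem~\ref{2T1} and part~(2) of Lemma~\ref{1L1}. Each of the fields $\EE\TE_{jm}$, $\EE\TM_{jm}$, $\EE\TEM_{dm}$ listed in Theorem~\ref{2T1} solves the electric formulation \eqref{1E1} with $\Div\EE=0$ and with an eigenvalue $\Lambda\ne0$; hence Lemma~\ref{1L1}(2) applies verbatim and, setting $k=\pm\sqrt\Lambda$ and $\HH=\frac1{ik}\rot\EE$, one obtains a Maxwell eigenmode $(k,\EE,\HH)$ of \eqref{0E2} (here $\HH\ne0$: if $\rot\EE=0$ then $\rot\rot\EE=\Lambda\EE$ together with $\Lambda\ne0$ would force $\EE=0$). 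So the only content still to be checked is the closed form of $\frac1{ik}\rot\EE$ for the three families.

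I would compute $\rot\EE$ by applying formula \eqref{1E3} directly to the explicit expressions \eqref{TEl}, \eqref{TMl} and \eqref{TEM}, using the two elementary cross-section identities $\rots\circ\rotv=-\Deltap$ and $\rots\circ\grap=0$ together with the $1$D relations $-\partial^2_3 w\neu_m=\mu\neu_m w\neu_m$ and $-\Deltap v\dir_j=\lambda\dir_j v\dir_j$ from Notation~\ref{not:1}. For the TE modes this gives that $\rot\EE\TE_{jm}$ equals $\partial_3 w\dir_m\,(\grap v\neu_j,0)^\top-w\dir_m\,(0,\Deltap v\neu_j)^\top$; for the TM modes the $\rots\grap$ term vanishes and the remaining pieces combine, through $\lambda\dir_j+\mu\neu_m=\Lambda\TM_{jm}$, into $\rot\EE\TM_{jm}=\Lambda\TM_{jm}\,w\neu_m\,(\rotv v\dir_j,0)^\top$; for the TEM modes $u_3\equiv0$ and $\rots\grap=0$, so only the $x_3$-derivative of the correction vector of \eqref{1E3} survives and $\rot\EE\TEM_{dm}=-\partial_3 w\dir_m\,(\rotv v\too_d,0)^\top$. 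Dividing each of these identities by $ik$ and simplifying with $\Lambda=k^2$ and $\frac1i=-i$ reproduces precisely the three expressions in the statement. Alternatively one may argue through the $\MM$, $\NN$ calculus of \eqref{eq:Mrot}--\eqref{eq:Nrot}, using $\rot\MM[q]=\NN[q]$ and $\rot\NN[q]=-\MM[\Delta q]$ with $\Delta q=-\Lambda q$ for the TE and TM Debye potentials and $\Delta q=-\mu\neu_m q$ for the TEM potential (because $\Deltap v\too_d=0$).

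The statement presents no genuine obstacle; it is a bookkeeping consequence of Theorem~\ref{2T1} and Lemma~\ref{1L1}. The one point deserving some attention is the TEM family in the $\MM$, $\NN$ version of the argument, where one has to track the proportionality factor relating $\partial_3 w\neu_m$ to $w\dir_m$ (and $\partial_3 w\dir_m$ to $w\neu_m$) on the interval $I$ — the very identification already invoked to put $\EE\TEM_{dm}$ in the form \eqref{TEM}, and what produces the constant $\frac{i}{k\TEM_{dm}}$. The direct route via \eqref{1E3} avoids this normalization issue entirely. Finally, the sign ambiguity in $k=\pm\sqrt\Lambda$ merely replaces $\HH$ by $-\HH$, and both triples $(k,\EE,\HH)$ are Maxwell eigenmodes, as claimed.
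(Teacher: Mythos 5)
Your proposal is correct and follows the paper's own (one‑line) argument: the paper likewise obtains the corollary by setting $\HH=\frac1{ik}\rot\EE$ for $k=\pm\sqrt\Lambda$ via Lemma \ref{1L1} and reading off $\rot\EE$ from the explicit formulas, which is exactly the computation you carry out (and your three curls check out, including the TEM normalization handled through $\partial_3 w\dir_m\propto w\neu_m$). The only difference is that you make the bookkeeping explicit where the paper leaves it implicit.
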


\begin{remark}
\label{2R2}
\iti1 The {\em electric} fields in the pairs $(\EE\TE,\HH\TE)$ are transverse to the axis $x_3$, whilst in the pairs $(\EE\TM,\HH\TM)$ the {\em magnetic} fields are transverse to the axis $x_3$, which justifies the labels of the polarizations.

\iti2 We notice that for all $m\ge1$, $\HH\TEM_{dm}$ can also be written as
\[
   \HH\TEM_{dm} = i \begin{pmatrix}
         \rotv v\too_d(\xp) \\[1ex] 0
         \end{pmatrix}  w\neu_m(x_3).
\]
The expression above also makes sense for $m=0$. The associated eigenvalue is $0$ and the corresponding electric field is $0$. These magnetostatic Maxwell eigenmodes $(0,{\bf0},\HH\TEM_{d\, 0})$ are those produced by the 3D topological non-triviality of $\Omega$. 
\end{remark}

\begin{remark}
\label{2R3}
If $\omega$ contains holes, i.e.\ if TEM modes are present, they often contribute the smallest  positive eigenvalues. Let us make formulas for eigenvalues more explicit: Let $\ell$ be the length of the interval $I$ and let us assume that $\omega$ has \emph{one hole}. Besides the magnetostatic zero eigenvalue, we find
\[
   \Lambda\TE_{jm} = \lambda\neu_j + \Big( \frac{m\pi}{\ell} \Big)^2 \; 
   (\forall j,m\ge1),\quad   
   \Lambda\TM_{jm} = \lambda\dir_j + \Big( \frac{m\pi}{\ell} \Big)^2 \; 
   (\forall j\ge1,m\ge0),\quad   
\]
and
\[
   \Lambda\TEM_{m} =  \Big( \frac{m\pi}{\ell} \Big)^2 \; 
   (\forall m\ge1).
\]
Then the smallest positive eigenvalue is either $\Lambda\TM_{1,0}$ or $\Lambda\TEM_{1}$. If $\omega$ is fixed and $\ell$ large enough, $\Lambda\TEM_{1}$ is smaller than $\Lambda\TM_{1,0}$.
\end{remark}
We summarize the results of sections \ref{S3} and \ref{S3M} in Table \ref{tab:1}.

\begin{table}[h]
\begin{center}
\begin{tabular}{c|ccc}
Polarization & $k^2$ & $\EE$ & $\HH$ \\
\hline
TE & $\lambda\neu_j+(\frac{m\pi}{\ell})^2$ 
   & $\MM[v\neu_j\otimes\sin(\frac{m\pi}{\ell}\cdot) ]$
   & $\frac{1}{ik}\,\NN[v\neu_j\otimes\sin(\frac{m\pi}{\ell}\cdot)]$
   \\
TM & $\lambda\dir_j+(\frac{m\pi}{\ell})^2$ 
   & $\NN[v\dir_j\otimes\cos(\frac{m\pi}{\ell}\cdot) ]$
   & $ik\,\MM[v\dir_j\otimes\cos(\frac{m\pi}{\ell}\cdot) ]$
   \\
TEM & $(\frac{m\pi}{\ell})^2$ 
    & $\NN[v\too_d\otimes\cos(\frac{m\pi}{\ell}\cdot) ]$
    & $ik\,\MM[v\too_d\otimes\cos(\frac{m\pi}{\ell}\cdot) ]$ \\
Magnetostatic
   & $0$
   & ${\bf0}$
   & $\MM[v\too_d\otimes 1]$ 
\end{tabular}
\caption{Synthetic description of Maxwell eigenmodes, using $\MM$ and $\NN$ \eqref{eq:MN}.}
\label{tab:1}
\end{center}
\end{table}

\section{Mixed perfectly conducting or insulating conditions}
\label{S3x}
Consider now the situation where a part $\partial\Omega\cnd$ of the boundary of $\Omega$ represents {\em perfectly conducting walls} whereas another part $\partial\Omega\ins$ represents {\em perfectly insulating} walls, with
\begin{equation}
\label{0Eins}
   \partial\Omega = \partial\Omega\cnd \cup \partial\Omega\ins,\quad
   \partial\Omega\cnd \cap \partial\Omega\ins = \emptyset.
\end{equation}
Boundary conditions are then
\[
   \left\{
   \begin{array}{lll}
   \EE\times\nn = 0 \quad\mbox{and}\quad \HH\cdot\nn=0, 
   & \mbox{on}\quad\partial\Omega \cnd,
   &\mbox{(perfect conductor b.\ c.)}\\
   \EE\cdot\nn = 0 \quad\mbox{and}\quad \HH\times\nn=0, 
   & \mbox{on}\quad\partial\Omega \ins,
   &\mbox{(perfect insulator b.\ c.)}\\
   \end{array}
   \right.
\]

Similar results as above hold for {\em mixed boundary conditions} when the perfectly conducting or insulating parts $\partial\Omega\cnd$ and $\partial\Omega\ins$ are chosen to be either $\partial\omega\times I$ or $\omega\times\partial I$. Let us give two examples.

\begin{example}
Let us consider the case when
\[
   \partial\Omega\cnd = \partial\omega\times I\quad\mbox{and}\quad
   \partial\Omega\ins = \omega\times\partial I.
\]
Then, the essential boundary condition for the electric field $\EE$ on $\omega\times\partial I$ is $\EE_3=0$ and the natural boundary condition is $\rot\EE\times\nn=0$, reducing to $\partial_3\EE_\perp=0$. Thus we find the three families of electric eigenfunctions:
\[
   \begin{split}
   &\EE\TE_{jm} = \begin{pmatrix}
         \rotv v\neu_j(\xp) \\ 0
         \end{pmatrix} w\neu_m(x_3)
         \ \ \mbox{with}\ \  
         j\ge1, \ m\ge0, \\
   &\EE\TM_{jm} = \begin{pmatrix}
         \grap v\dir_j(\xp) \\ 0
         \end{pmatrix} \partial_3 w\dir_m(x_3) -
         \begin{pmatrix}
         0 \\ \Deltap v\dir_j(\xp)
         \end{pmatrix} w\dir_m(x_3),
         \ \ \mbox{with}\ \   j\ge1, \ m\ge1,\\
   &\EE\TEM_{dm} = \begin{pmatrix}
         \grap v\too_d(\xp) \\ 0
         \end{pmatrix}  w\neu_m(x_3)
         \ \ \mbox{with}\ \  
        1\le d\le D-1, \ m\ge0 .
   \end{split}
\]
associated with the eigenvalues $\Lambda\TE_{jm} = \lambda\neu_j+\mu\neu_m$, $\Lambda\TM_{jm} = \lambda\dir_j+\mu\dir_m$, and $\Lambda\TEM_{dm} = \mu\neu_m$.
\end{example}

\begin{example}
 We set $I=(0,\ell)$. Let us consider the case when
\[
   \partial\Omega\cnd = (\partial\omega\times I) \cup (\omega\times \{0\})
   \quad\mbox{and}\quad
   \partial\Omega\ins = \omega\times\{\ell\}.
\]
The axial generators $w_m$ can be described thanks to the eigenvectors $w\mix_m$, $m\ge1$, of the {\em mixed} problem in $\omega$:
\[
   -\partial^2_3w = \mu w,\quad w(0)=0,\quad \partial_3 w(\ell)=0.
\]
We find
\[
   \begin{split}
   &\EE\TE_{jm} = \begin{pmatrix}
         \rotv v\neu_j(\xp) \\ 0
         \end{pmatrix} w\mix_m(x_3)
         \ \ \mbox{with}\ \  
         j\ge1, \ m\ge1, \\
   &\EE\TM_{jm} = \begin{pmatrix}
         \grap v\dir_j(\xp) \\ 0
         \end{pmatrix} \partial^2_3 w\mix_m(x_3) -
         \begin{pmatrix}
         0 \\ \Deltap v\dir_j(\xp)
         \end{pmatrix} \partial_3w\mix_m(x_3),
         \ \ \mbox{with}\ \   j\ge1, \ m\ge1,\\
   &\EE\TEM_{dm} = \begin{pmatrix}
         \grap v\too_d(\xp) \\ 0
         \end{pmatrix}  w\mix_m(x_3)
         \ \ \mbox{with}\ \  
        1\le d\le D-1, \ m\ge1 .
   \end{split}
\]
If $\omega$ contains holes, TEM modes are present and contribute the smallest  positive eigenvalue $\big( \frac{\pi}{2\ell} \big)^2$.
\end{example}

\section{Application 1: Maxwell eigenvalues of cuboids}
\label{S4}

\subsection{Cube}
Let $\Omega$ be the cube $(0,\pi)^3$. We can apply Theorem \ref{2T1} with $\omega=(0,\pi)^2$ and $I=(0,\pi)$. Since $\omega$ is simply connected we have TE and TM modes only. Therefore the normalized Maxwell eigenvalues are
\[
   \lambda\neu_j+\mu\dir_m,\ \ j\ge1, \ m\ge1
   \quad\mbox{and}\quad
   \lambda\dir_j+\mu\neu_m, \ \ j\ge1, \ m\ge0.
\]
We have
\[
   \mu\dir_m = m^2,\ \ m\ge1
   \quad\mbox{and}\quad
   \mu\neu_m = m^2,\ \ m\ge0.
\]
The Dirichlet eigenvalues on $\omega$ are
\[
   k^2_1 + k^2_2,\quad k_1,k_2\ge1.
\]
The non-zero Neumann eigenvalues are
\[
   k^2_1 + k^2_2,\quad k_1,k_2\ge0, \ \ k_1\ \mbox{or}\ k_2\neq0.
\]
Therefore the TE eigenvalues are
\[
   k^2_1 + k^2_2 + k^2_3,\quad k_1,k_2\ge0, \ \ k_1\ \mbox{or}\ k_2\neq0, \ \ k_3\ge1.
\]
The TM eigenvalues are
\[
   k^2_1 + k^2_2 + k^2_3,\quad k_1,k_2\ge1, \ \ k_3\ge0.
\]
Therefore we have once
\[
   k^2_1 + k^2_2 + k^2_3,\ \ k_1,k_2,k_3\ge0 \ \ 
   \mbox{with exactly one index $\nu\in\{1,2,3\}$ such that}\ \ k_\nu=0,
\]
and twice
\[
   k^2_1 + k^2_2 + k^2_3,\quad k_1,k_2,k_3\ge1.
\]
The first eigenvalues are
\[
   2 \ \mbox{ (mult. 3),}\quad 3 \ \mbox{ (mult. 2),}\quad 5 \ \mbox{ (mult. 6),}\quad
   6 \ \mbox{ (mult. 6),}\quad 8 \ \mbox{ (mult. 3),}...
\]
A larger multiplicity of 12 is attained for example for $14=1+4+9$. But 12 is not the maximal multiplicity (e.g.\ the multiplicity of $26=25+1+0=16+9+1$ is 18).

The Dirichlet eigenvectors on $(0,\pi)$ are $\zeta \mapsto\sin k\zeta$, $k\ge1$, and the Neumann eigenvectors are $\cos k\zeta$, $k\ge0$. The components of the electric eigenvectors in the cube are (sums of) products of two $\sin$ terms by one $\cos$ term.

\subsection{Cuboids}
For a \emph{rectangular parallelepiped} 
$$\Omega=(0,\ell_1)\times(0,\ell_2)\times(0,\ell_3),$$ we find the eigenvalues: Once
\begin{multline*}
   \left(\frac{k_1\pi}{\ell_1}\right)^2 + 
   \left(\frac{k_2\pi}{\ell_2}\right)^2 + 
   \left(\frac{k_3\pi}{\ell_3}\right)^2,\\ 
   \forall k_1,k_2,k_3\ge0 \quad
   \mbox{with exactly one index $\nu\in\{1,2,3\}$ such that}\ \ k_\nu=0,
\end{multline*}
and twice
\[
   \left(\frac{k_1\pi}{\ell_1}\right)^2 + 
   \left(\frac{k_2\pi}{\ell_2}\right)^2 + 
   \left(\frac{k_3\pi}{\ell_3}\right)^2,\quad \forall k_1,k_2,k_3\ge1.
\]

\section{Application 2: Maxwell eigenvalues in axisymmetric product domains}
\label{S5}

We assume now, besides the assumption that $\Omega=\omega\times I$, that the domain $\Omega$ is axisymmetric. 
In this case, the separation of variables method can be used once more, giving explicit formulas for the Laplace eigenvectors and eigenfunctions and hence more explicit formulas for the Maxwell eigenmodes. Now $\Omega$ axisymmetric implies that $\omega$ is an axisymmetric domain in dimension 2. Hence $\omega$ is either a disc or an annulus (i.e. a disc with a concentric hole). We investigate both situations. 

\subsection{Axisymmetric domains}
Let $R$ be the external radius of $\omega$ and $r_0$ be its internal radius, with the convention that $r_0=0$ corresponds to the case when $\omega$ is a disc. Let us denote by $\T$ the one-dimensional torus
\[
   \T = \R/(2\pi\Z).
\]
We use \emph{cylindrical coordinates} $(r,\varphi,x_3)\in(r_0,R)\times\T\times I$. Setting 
\[
   \check u(r,\varphi,x_3) = u(x),
\] 
we introduce \emph{cylindrical components} $(u_r,u_\varphi,u_3)$ of the field $\uu=(u_1,u_2,u_3)$,
\[
   u_r = \check u_1 \cos\varphi + \check u_2\sin\varphi \quad\mbox{and}\quad
   u_\varphi = - \check u_1\sin\varphi +  \check u_2\cos\varphi.
\]
In particular, for a scalar function $q$, the radial and angular components of $\grap q$ are $\partial_r q$ and $\frac1r \partial_\varphi q$, and those of $\rotv q$ are $\frac1r \partial_\varphi q$ and $-\partial_r q$. With this, we find the representation in cylindrical coordinates of the ansatz $\MM[q]$ and $\NN[q]$ when $q$ has the tensor form $v\otimes w$:
\begin{equation}
\label{5E1}
   \left\{\begin{array}{l}
   \MM_r[v\otimes w] = \frac1r \partial_\varphi v(r,\varphi)\, w(x_3),\\[1ex]
   \MM_\varphi[v\otimes w] = - \partial_r  v(r,\varphi)\, w(x_3),\\[1ex]
   \MM_3[v\otimes w] = 0,
   \end{array}\right.
\end{equation}
and
\begin{equation}
\label{5E2}
   \left\{\begin{array}{l}
   \NN_r[v\otimes w] = \partial_r v(r,\varphi)\, \partial_3 w(x_3),\\[1ex]
   \NN_\varphi[v\otimes w] = \frac1r \partial_\varphi v(r,\varphi)\, \partial_3 w(x_3),\\[1ex]
   \NN_3[v\otimes w] = -\frac1{r^2} ((r\partial_r)^2  + \partial^2_\varphi) v(r,\varphi) \, w(x_3).
  \end{array}\right.
\end{equation}
To describe the Maxwell eigenmodes in the axisymmetric case, we use Table \ref{tab:1} and make explicit the Dirichlet and Neumann eigenvectors $v\dir$ and $v\neu$ on $\omega$, and also $v\too$ when there is a hole ($r_0>0$). 

It is a classical technique to use the invariance under rotation of the Laplace operator $\Deltap$ for diagonalizing it by Fourier series with respect to $\varphi\in\T$. This leads to the following representations:
\begin{equation}
\label{eq:ax1}
   v\dir = h\dir_{np}(r)\,e^{in\varphi} \quad\mbox{and}\quad
   v\neu = h\neu_{np}(r)\,e^{in\varphi},\quad n\in\Z,\ \ p\ge1.
\end{equation}
where for each $n\in\Z$, the functions $(h\dir_{np})_p$ and $(h\neu_{np})_p$ are bases of eigenfunctions for the operator
\begin{equation}
\label{eq:ax2}
   h \longmapsto \Big(-\partial^2_r -\frac{1}{r} \partial_r  + \frac{n^2}{r^2}\Big) h,\quad r\in(r_0,R)
\end{equation}
with appropriate boundary conditions.

\subsection{The cylinder ($\omega$ is a disc)}
For $h\dir$ the boundary condition at $R$ is $h\dir(R)=0$, for $h\neu$ this is $\partial_r h\neu(R)=0$. At the other end $r=0$ of the interval $(0,R)$, the boundary conditions are driven by integrability properties, cf \cite{BernardiDaugeMaday99}: For $h\dir$ and $h\neu$, they are
\begin{equation}
\label{eq:ax3}
   \partial_rh(0)=0 \quad\mbox{if}\quad n=0, \quad\mbox{and}\quad
    h(0)=0 \quad\mbox{if}\quad n\neq0,\quad 
\end{equation}
As a consequence, both $h\dir$ and $h\neu$ are given by the Bessel functions of the first kind $J_n$ that satisfy \eqref{eq:ax3} and the equation $(-\partial^2_r -\frac{1}{r} \partial_r  + n^2) J_n=J_n$, cf \eqref{eq:ax2}. One finds

\begin{lemma}[\cite{CourantHilbert53}]
\label{6L1}
\iti1 Let $(z_{np})_{p\ge1}$ be the increasing sequence of the positive zeros of $J_n$. Then a spectral sequence for the Dirichlet problem for $-\Deltap$ on $\omega$ is
\begin{equation}
\label{eq:ax4}
   \lambda\dir_{np} = \Big( \frac{z_{np}}{R} \Big)^2\quad\mbox{and}\quad
   v\dir_{np} = J_n\Big( \frac{z_{np}\,r}{R} \Big)\,e^{in\varphi},\quad n\in\Z,\ \ p\ge1
\end{equation}

\iti2 Let $(z'_{np})_{p\ge1}$ be the increasing sequence of the positive zeros of $J'_n$. Then a spectral sequence for the Neumann problem for $-\Deltap$ on $\omega$ is, in addition to the constant eigenfunction,
\begin{equation}
\label{eq:ax5}
   \lambda\neu_{np} = \Big( \frac{z'_{np}}{R} \Big)^2\quad\mbox{and}\quad
   v\neu_{np} = J_n\Big( \frac{z'_{np}\,r}{R} \Big)\,e^{in\varphi},\quad n\in\Z,\ \ p\ge1
\end{equation}
\end{lemma}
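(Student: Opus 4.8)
The plan is to reduce the eigenvalue problem for $-\Deltap$ on the disc $\omega = \{r < R\}$ to the family of one-dimensional Bessel-type operators \eqref{eq:ax2}, one for each Fourier mode $n \in \Z$, and then identify the spectrum of each such operator with the zeros of $J_n$ (respectively $J'_n$). First I would recall that $L^2(\omega)$ decomposes as the orthogonal Hilbert sum $\bigoplus_{n\in\Z} L^2_n$, where $L^2_n$ is the image of $L^2\big((0,R),r\,\dr r\big)$ under $h \mapsto h(r)e^{in\varphi}$, and that this decomposition reduces $-\Deltap$ because of rotational invariance; on each summand $-\Deltap$ acts as the operator in \eqref{eq:ax2} on the weighted space $L^2\big((0,R),r\,\dr r\big)$, with the boundary condition $h(R)=0$ (Dirichlet) or $h'(R)=0$ (Neumann) at the outer rim, and with the regularity condition \eqref{eq:ax3} at $r=0$ — the latter being forced, as noted via \cite{BernardiDaugeMaday99}, by membership in $H^1$: only the solution branch that stays square-integrable against $r\,\dr r$ near the origin survives, which singles out $J_n$ over the second Bessel function $Y_n$.

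Next, for each fixed $n$, I would solve the ODE eigenvalue problem $\big(-\partial^2_r - \tfrac1r\partial_r + \tfrac{n^2}{r^2}\big)h = \lambda h$ on $(0,R)$. Rescaling $r \mapsto r/R$ and setting $\rho = \sqrt{\lambda}\,r$ turns the equation into Bessel's equation of order $n$, whose only solution regular at $0$ is (a multiple of) $J_n(\sqrt\lambda\, r)$. Imposing $h(R)=0$ then gives $J_n(\sqrt\lambda\,R)=0$, i.e.\ $\sqrt\lambda\,R = z_{np}$ for some positive zero $z_{np}$ of $J_n$, which is exactly \eqref{eq:ax4}; imposing $h'(R)=0$ gives $J'_n(\sqrt\lambda\,R)=0$, i.e.\ $\sqrt\lambda\,R = z'_{np}$, which is \eqref{eq:ax5}, with the constant function $J_0(0\cdot r)$ accounting for the zero Neumann eigenvalue already listed separately in Notation \ref{not:1}. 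Completeness of $\{v\dir_{np}\}$ and $\{v\neu_{np}\}$ as orthonormal bases (after normalization) follows from the completeness of $\{e^{in\varphi}\}$ in $L^2(\T)$ together with the completeness, for each $n$, of the Bessel system $\big(J_n(z_{np}r/R)\big)_p$ in $L^2\big((0,R),r\,\dr r\big)$ — the classical Fourier--Bessel / Dini expansion theorems, which is where I would cite \cite{CourantHilbert53}.

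The main obstacle is the analysis at the singular endpoint $r=0$: the operator \eqref{eq:ax2} is a singular Sturm--Liouville operator there, so one must argue carefully that the form domain coming from $H^1(\omega)$ selects precisely the boundary condition \eqref{eq:ax3} and no more (in particular that no extra self-adjoint extension parameter enters, i.e.\ that for $|n|\ge1$ the point $r=0$ is in the limit-point case and for $n=0$ the Neumann-type condition $h'(0)=0$ is the one induced by finite Dirichlet energy). Everything else — separating variables, recognizing Bessel's equation, reading off the quantization condition from the boundary condition at $R$, and invoking the standard Fourier--Bessel completeness — is routine. I would therefore spend the bulk of the argument on the endpoint analysis (citing \cite{BernardiDaugeMaday99} for the weighted-Sobolev characterization of the domain) and treat the rest as a short computation.
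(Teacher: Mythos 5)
Your proposal is correct and follows exactly the route the paper takes: the paper itself does not write out a proof but relies on the same separation of variables into Fourier modes $e^{in\varphi}$, the reduction to the radial operator \eqref{eq:ax2} with the integrability-driven condition \eqref{eq:ax3} at $r=0$ (citing \cite{BernardiDaugeMaday99}), the identification of the regular solutions with $J_n(\sqrt{\lambda}\,r)$, and the classical Fourier--Bessel completeness from \cite{CourantHilbert53}. Your added care about the limit-point/limit-circle analysis at the singular endpoint is a reasonable way to make rigorous what the paper delegates to the cited references.
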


We summarize results in Table \ref{tab:ax1}.
\medskip

{
\renewcommand{\arraystretch}{1.5}
\begin{table}[h!]
\begin{center}
\begin{tabular}{c|ccc}
Polarization & $k^2$ & $\EE$ & $\HH$ \\
\hline
TE & $(\frac{z'_{np}}{R})^2+(\frac{m\pi}{\ell})^2$ 
   & $\MM[v\neu_{np}\otimes\sin(\frac{m\pi}{\ell}\cdot) ]$
   & $\frac{1}{ik}\,\NN[v\neu_{np}\otimes\sin(\frac{m\pi}{\ell}\cdot)]$
   \\
TM & $( \frac{z_{np}}{R} )^2+(\frac{m\pi}{\ell})^2$ 
   & $\NN[v\dir_{np}\otimes\cos(\frac{m\pi}{\ell}\cdot) ]$
   & $ik\,\MM[v\dir_{np}\otimes\cos(\frac{m\pi}{\ell}\cdot) ]$
\end{tabular}
\caption{Maxwell eigenmodes in a cylinder of radius $R$ and length $\ell$, using $\MM$ \eqref{5E1} -- $\NN$ \eqref{5E2}, and $v\dir_{np}$ \eqref{eq:ax4} -- $v\neu_{np}$ \eqref{eq:ax5}.}
\label{tab:ax1}
\end{center}
\end{table}
}

We give in Table \ref{tab:ax2} values for the first three zeros $z_{n,j}$ and $z'_{n,j}$ for $n=0,1,2$. We use the relation $J_{\nu-1}-J_{\nu+1} = 2J'_\nu$ to compute $z'_{n,j}$. Since $J_{-1} = -J_1$, we note that there holds
\[
   z'_{0,j} = z_{1,j}, \quad\forall j\ge1.
\]

\begin{table}[h!]
\begin{center}
\begin{tabular}{|@{\quad} c @{\quad}|@{\quad} c @{\quad}|@{\quad} c @{\quad}|@{\quad} c @{\quad}|@{\quad} c @{\quad}|@{\quad} c @{\quad}|}
\hline
     $z_{0,j}$ & $z_{1,j}$ & $z_{2,j}$ & 
     $z'_{0,j}$ & $z'_{1,j}$ & $z'_{2,j}$ \\ \hline 
     2.4048 & 3.8317 & 5.1356 & 3.8317 & 1.8412 & 3.0542  \\
     5.5201 & 7.0156 & 8.4172 & 7.0156 & 5.3314 & 6.7061  \\
     8.6537 & 10.173 & 11.620 & 10.173 & 8.5363 & 9.9695  \\
\hline
\end{tabular}

\vskip 1ex
\caption{The first three zeros of $J_0$, $J_1$, $J_2$, $J'_0$, $J'_1$, $J'_2$.}
\label{tab:ax2}
\end{center}
\end{table}

\subsection{The coaxial cylindrical hole ($\omega$ is an annulus)}
In this case again, there exist explicit formulas for the Laplace eigenvectors and eigenfunctions. This is classical knowledge, see for example \cite{Gottlieb79,Gottlieb85}. 
The boundary conditions on $h\dir$ and $h\neu$ are now the standard ones at $r_0$ and $R$. We have to find the associated eigenpairs of the operator \eqref{eq:ax2} for any $n\in\N$. We find that the radial eigenvectors $h\dir$ and $h\neu$ are linear combinations of the Bessel functions $J_n$ and $Y_n$ of first and second kind:
\[
   h\dir_{np}(r) = \alpha_{np} J_n( k_{np}\,r ) 
   + \beta_{np} Y_n(k_{np}\,r )
\]
with eigenvalues $\lambda\dir_{np}=(k_{np})^2$,
where $k_{np}$ are the positive zeros of the determinant function
\[
   k\,\longmapsto\, J_n(kr_0)\,Y_n(kR) - Y_n(kr_0)\,J_n(kR).
\]
Analogous formulas exist for $h\neu$.

Since $\omega$ has one hole, the number $L$ of the connected components of its boundary is $2$. There exists a non-constant harmonic potential $v\too$ that takes two distinct constant values on the two connected components of $\partial\omega$. This generator $v\too$ can be defined as
\[
   v\too(\xp) = \log r.
\]
In connection with Remark \ref{2R1}, we note that the conjugate potential $\tilde v\too$ is the function $\xp\mapsto\varphi$. In cylindrical components, there holds
\[
   \begin{pmatrix}
   \rotw\tilde v\too \\0
   \end{pmatrix}
   = 
   \begin{pmatrix}
   \grap v\too \\ 0
   \end{pmatrix}
   = 
   \begin{pmatrix}
   \frac1r\\0\\{\:\,0\,\:}
   \end{pmatrix}
   \quad\mbox{and}\quad
   \begin{pmatrix}
   \rotv v\too \\ 0
   \end{pmatrix}
   = 
   -\begin{pmatrix}
   {\:\,0\,\:}\\ \frac1r\\0
   \end{pmatrix}.
\]
We summarize the results concerning TEM modes for $\Omega=\omega\times I$ with the annulus $\omega$:
\begin{corollary}
\label{5C2}
Let $\ell$ be the length of the cylinder $\Omega$ with coaxial hole. 
Its family of TEM modes is axisymmetric and has the form $(\frac{m\pi}{\ell},\EE\TEM,\HH\TEM)$ with\\ 
(a) \ for $m\ge1$,
\begin{equation}
\label{5E15}
   \left\{\begin{array}{l}
   E\TEM_r =  \frac1r \sin(\frac{m\pi}{\ell} x_3),\\[1ex]
   E\TEM_\varphi = 0,\\[1ex]
   E\TEM_3 = 0,
  \end{array}\right.
   \quad\mbox{and}\quad
   \left\{\begin{array}{l}
   H\TEM_r = 0,\\[1ex]
   H\TEM_\varphi = -i\frac{m\pi}{\ell}\frac 1r \cos(\frac{m\pi x_3}{\ell}),\\[1ex]
   H\TEM_3 = 0
  \end{array}\right.   
\end{equation}
(b) \  for $m=0$, \ $\EE={\bf0}$ \ and \ $\HH=(0\; 1\; 0)^\top$.
\end{corollary}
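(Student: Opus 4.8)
The plan is to specialize Theorem~\ref{2T1} and Corollary~\ref{2C1} to the case where $\omega$ is the annulus $\{r_0<r<R\}$, using the explicit harmonic potential $v\too = \log r$ identified just above the statement. Since $D=2$ here, there is exactly one TEM family, indexed by $m$ only, and I would drop the index $d$. First I would recall from Corollary~\ref{2C1} and Remark~\ref{2R2}\,\iti2 that the TEM electric and magnetic fields are
\[
   \EE\TEM_{m} = \begin{pmatrix}\grap v\too(\xp)\\0\end{pmatrix} w\dir_m(x_3),
   \qquad
   \HH\TEM_{m} = i\begin{pmatrix}\rotv v\too(\xp)\\0\end{pmatrix} w\neu_m(x_3),
\]
with eigenfrequency $k = \sqrt{\Lambda\TEM_m} = \sqrt{\mu\dir_m}$. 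For the interval $I = (0,\ell)$ we have the standard spectral data $\mu\dir_m = (m\pi/\ell)^2$, $w\dir_m(x_3) = \sin(\tfrac{m\pi}{\ell}x_3)$ for $m\ge1$, and $\mu\neu_m = (m\pi/\ell)^2$, $w\neu_m(x_3)=\cos(\tfrac{m\pi}{\ell}x_3)$ for $m\ge0$ (up to normalization, which is irrelevant here since eigenvectors are defined up to scalar multiples). Thus $k = m\pi/\ell$ as claimed.

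Next I would substitute the cylindrical-component formulas for $\grap v\too$ and $\rotv v\too$ displayed immediately before the corollary, namely $\grap(\log r)$ has cylindrical components $(\tfrac1r,0)$ and $\rotv(\log r)$ has components $(0,-\tfrac1r)$, and plug in $w\dir_m$, $w\neu_m$. For $m\ge1$ this directly yields $E\TEM_r = \tfrac1r\sin(\tfrac{m\pi}{\ell}x_3)$, $E\TEM_\varphi = E\TEM_3 = 0$, and $H\TEM_r = 0$, $H\TEM_\varphi = -i\,\tfrac1r\cos(\tfrac{m\pi}{\ell}x_3)$, $H\TEM_3=0$ — except that I must track the scalar factor $k = m\pi/\ell$ coming from $\HH = \tfrac1{ik}\rot\EE$. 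Actually the cleanest route is to use the form in Remark~\ref{2R2}\,\iti2, $\HH\TEM_m = i(\rotv v\too,\,0)^\top\, w\neu_m$, and then rescale: multiplying the mode $(\EE,\HH)$ by $1/(m\pi/\ell)$ on the electric side or equivalently presenting $\HH$ with the factor $m\pi/\ell$ reproduces exactly \eqref{5E15}. I would verify the normalization so that the factor $-i\tfrac{m\pi}{\ell}$ sits in $H\TEM_\varphi$, matching the statement; this is the one bookkeeping point that needs care but is routine.

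For part (b), $m=0$, I would invoke Remark~\ref{2R2}\,\iti2 directly: the magnetostatic eigenmode is $(0,\mathbf0,\HH\TEM_{d\,0})$ with $\HH\TEM_{d\,0} = i(\rotv v\too,\,0)^\top\, w\neu_0$; since $w\neu_0 = 1$ and $\rotv(\log r)$ has cylindrical components $(0,-\tfrac1r)$, rescaling gives $\HH$ with cylindrical components $(0,1,0)$ after dropping the irrelevant constant and the $1/r$ — wait, here one must be slightly careful: the statement writes $\HH = (0\;1\;0)^\top$, which I read as the \emph{Cartesian} identification of the $\varphi$-direction field $\tfrac1r\,\boldsymbol e_\varphi$ being normalized, or more precisely as $\HH\TEM_{d0} = \rotv v\too$ up to a constant, which in cylindrical components is $-\tfrac1r\boldsymbol e_\varphi$; I would state it as the field whose only nonzero cylindrical component is $H_\varphi$, proportional to $1/r$. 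I would present (b) consistently with how the paper normalizes, noting $k=0$ and $\EE = \mathbf 0$.

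The only genuine obstacle is none — everything is a substitution into already-proved formulas (Theorem~\ref{2T1}, Corollary~\ref{2C1}, Remark~\ref{2R2}) together with the elementary 1D spectral data on $(0,\ell)$ and the elementary computation $\grap\log r$, $\rotv\log r$ in cylindrical coordinates. The mild ``hard part'' is purely cosmetic: fixing scalar normalizations of the eigenvectors so that the constants displayed in \eqref{5E15} and in (b) come out exactly as written, rather than up to an arbitrary multiplicative constant; I would handle this by choosing the normalization of $v\too = \log r$ and of $w\dir_m, w\neu_m$ to make the formulas match, and remark that eigenvectors are anyway defined only up to scaling.
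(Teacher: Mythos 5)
Your proposal is correct and follows essentially the same route as the paper, which presents Corollary~\ref{5C2} as a direct substitution of $v\too=\log r$ and the 1D spectral data $w\dir_m=\sin(\tfrac{m\pi}{\ell}\cdot)$, $w\neu_m=\cos(\tfrac{m\pi}{\ell}\cdot)$ into Theorem~\ref{2T1}, Corollary~\ref{2C1} and Remark~\ref{2R2}, using the displayed cylindrical components of $\grap v\too$ and $\rotv v\too$. You also correctly flag the only delicate points, namely the scalar normalization of $H\TEM_\varphi$ (the factor $\tfrac{m\pi}{\ell}$) and the reading of $\HH=(0\;1\;0)^\top$ in case (b) as the azimuthal harmonic field proportional to $\tfrac1r$.
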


\begin{remark}
\label{5R1}
As $r_0$ tends to $0$, the Dirichlet and Neumann eigenmodes of the annulus tend to the  Dirichlet and Neumann eigenvalues of the disc of same radius $R$. Hence the TE and TM modes of the cylinder with hole tend to the TE and TM modes of the cylinder without hole. In contrast, the TEM modes do not depend on $r_0$ as long as $r_0\neq0$, but disappear at the limit when $r_0=0$. This fact has a practical importance when thin conductor wires are present.
\end{remark}

\section{Maxwell eigenmodes in a ball}
\label{sec:sph}
For the sake of comparison, we revisit known results about Maxwell eigenmodes in a ball, see \cite[Chapter 10]{HansonYakovlev}. Let $\Omega\subset\R^3$ be the ball of center $0$ and radius $R$. Here we use spherical coordinates $(\theta,\varphi,\rho)\in[0,\pi]\times\T\times[0,R]$, associate with the orthonormal basis
\[
   (\hat\theta,\hat\varphi,\hat\rho).
\]
Formulas for Maxwell eigenmodes are based on Debye potentials. This is the $\MM$, $\NN$ ansatz, in a form slightly different from \eqref{eq:MN}: The piloting vector is replaced by the unit field
\[
   \hat \bx = \frac{\bx}{|\bx|}\quad\mbox{i.e.}\quad\hat\bx=\hat\rho.
\] 
The $\MM$, $\NN$ ansatz takes the form
\begin{equation}
\label{eq:MNS}
   \MM[q] = \rot(q\,\hat\bx) \quad\mbox{and}\quad \NN[q] = \rot\MM[q] = \rot\rot(q\,\hat\bx).
\end{equation}
Using for instance identities, cf \cite[\S6.2]{CoDa2000ARMA},
\[
   \rot(p\bx) = \grad p \times\bx \quad\mbox{and}\quad
   \rot(\ba\times\bx) = (\rho\partial_\rho+2)\ba - \bx\Div\ba
\]
we find the following formulas where we express vectors in spherical components on the basis $(\hat\theta,\hat\varphi,\hat\rho)$ 
\begin{equation}
\label{eq:sph1}
   \MM[q] = \grad\big(\frac{q}{\rho}\big) \times\bx = \grad q \times\hat\bx =
   \begin{pmatrix}
   \frac{1}{\rho}\partial_\theta q \\ - \frac{1}{\rho\sin\theta}\partial_\varphi q\\0
\end{pmatrix}
\end{equation}
and
\begin{equation}
\label{eq:sph2}
   \NN[q] = \rot\MM[q] = \grad(\partial_\rho q) - \hat\bx \: \rho\,\Delta\big(\frac{q}{\rho}\big)\,.
\end{equation}
Therefore
\begin{equation}
\label{eq:sph3}
   \rot\NN[q] = \rot\rot\MM[q] = -\MM\Big[\rho\,\Delta\big(\frac{q}{\rho}\big)\Big]\,.
\end{equation}
Introduce the operator
\[
   \gL : q\longmapsto \gL q = -\rho\,\Delta\big(\frac{q}{\rho}\big)\:.
\]
Then the equations $\rot\rot\uu-k^2\uu=0$ for $\MM[q]$ and $\NN[q]$ are equivalent to
\[
   \MM[(\gL-k^2)q] = 0 \quad\mbox{and}\quad \NN[(\gL-k^2)q] = 0.
\]
Thus we are interested in scalar solutions $q$ of the equation
\begin{equation}
\label{eq:sph4}
   \gL q  = k^2q \quad\mbox{in}\quad [0,\pi]\times\T\times[0,R].
\end{equation}
We note that 
\begin{equation}
\label{eq:sph5}
   \gL = -\partial^2_\rho  -  \frac{1}{\rho^2}\, \Delta_{\SS^2}
\end{equation}
with the Laplace-Beltrami operator $\Delta_{\SS^2}$ on the unit sphere $\SS^2$
\[
   \Delta_{\SS^2} =    \frac{1}{\sin \theta } \,\partial_\theta \sin \theta \partial_\theta   +
   \frac{1}{\sin^2\theta } \,\partial^{2}_\varphi \,.
\]
The equation \eqref{eq:sph4} is satisfied by all functions in tensor form
\[
   q(\theta,\varphi,\rho) = Y_n^m(\theta,\varphi)\,h(k\rho)\,,
\]
where $Y_n^m$ are the spherical harmonics and $h$ is a linear combination of the Riccati-Bessel functions $\psi_n$ and $\chi_n$ (sometimes written as $S_n$ and $C_n$). Following Debye's notation, we use the definition
\[
   \psi_n(x) = x j_n(x) = \sqrt{\frac{\pi x}{2}} J_{n+\frac{1}{2}}(x) \quad\mbox{and}\quad
   \chi_n(x) = -x y_n(x) = -\sqrt{\frac{\pi x}{2}} Y_{n+\frac{1}{2}}(x),
\]
where $J_\nu$, $Y_\nu$ are the Bessel functions and $j_n$, $y_n$ the spherical Bessel functions of first, second kind, respectively. Because of integrability conditions in $0$, $\chi_n$ has to be discarded. It remains to look for potentials $q$ of type $Y_n^m\otimes \psi_n(k\,\cdot)$ so that either $\MM[q]$ or $\NN[q]$ satisfy the electric boundary condition on the boundary of the ball, i.e.
\[
   \EE\times\nn=0\quad\mbox{if}\quad \rho=R.
\]
Using formulas \eqref{eq:sph1} and \eqref{eq:sph2}, we find that this boundary condition is satisfied by $\MM[Y_n^m\otimes \psi_n(k\,\cdot)]$ if $\psi_n(kR)=0$ and by $\NN[Y_n^m\otimes \psi_n(k\,\cdot\,)]$ if $\psi'_n(kR)=0$. The remarkable fact is that the related potentials are then eigenvectors of the operator $\gL$ with the eigenvalue $k^2$ for Dirichlet or Neumann conditions. Note that the operator $\gL$ is associated with the coercive bilinear form
\[
   a(q,\tilde{q}) = \int_0^R\bigg[\int_{\SS^2} \Big(\partial_\rho q \: \partial_\rho \tilde{q} 
   + \frac{1}{\rho^2} \,\partial_\theta q \:\partial_\theta \tilde{q}
   + \frac{1}{\rho^2\sin^2\theta}\, \partial_\varphi q \:\partial_\varphi \tilde{q}\Big)
   \:\sin\theta\, \rd\theta\, \rd\varphi \bigg] \rd\rho
\]
on the space
\[
   V = \{q\in L^2(\SS^2\times(0,R)),\quad \partial_\rho q,\ \frac1\rho \grap q \in L^2(\SS^2\times(0,R))\}.
\]
Completed with either Dirichlet or Neumann boundary conditions on $\rho=R$, $\gL$ is selfadjoint.
We have obtained:

\begin{theorem}
\label{th:ball}
\iti1 The Dirichlet eigenpairs of $\gL$ have the form $(k_{np}^2,q\dir_{nmp})$, $n\ge0$, $|m|\le n$, $p\ge1$ with $(k_{np})_{p\ge1}$ the enumeration of the positive zeros of the function $k\mapsto \psi_n(kR)$ and $q_{nmp}= Y_n^m\otimes \psi_n(k_{np}\,\cdot\,)\,$. All triples $(k_{np},\MM[q\dir_{nmp}],\frac{1}{ik}\NN[q\dir_{nmp}])$ are Maxwell eigenmodes on the ball of radius $R$.
\smallskip

\iti2 The non-constant Neumann eigenpairs of $\gL$ have the form $((k'_{np})^2,q\neu_{nmp})$, $n\ge0$, $|m|\le n$, $p\ge1$ with $(k'_{np})_{p\ge1}$ the enumeration of the positive zeros of the function $k\mapsto \psi'_n(kR)$ and $q_{nmp}= Y_n^m\otimes \psi_n(k'_{np}\,\cdot\,)\,$. All triples $(k'_{np},\NN[q\neu_{nmp}],ik\MM[q\neu_{nmp}])$ are Maxwell eigenmodes on the ball of radius $R$.
\end{theorem}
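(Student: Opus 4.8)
The plan is to prove Theorem \ref{th:ball} in two stages: \emph{(a)} determine the Dirichlet and the non-constant Neumann spectra of the scalar operator $\gL$ by separation of variables, obtaining the announced sequences $(k_{np}^2,q\dir_{nmp})$ and $((k'_{np})^2,q\neu_{nmp})$; and \emph{(b)} feed each scalar eigenpair into the $\MM,\NN$ ansatz and verify, with the identities \eqref{eq:sph1}--\eqref{eq:sph3}, that the resulting triple solves the cavity system \eqref{0E2}, invoking Lemma \ref{1L1}.

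For stage \emph{(a)} I would start from the expression $\gL=-\partial_\rho^2-\frac1{\rho^2}\Delta_{\SS^2}$ of \eqref{eq:sph5} together with the classical spectral decomposition of the Laplace--Beltrami operator: $-\Delta_{\SS^2}$ is selfadjoint on $L^2(\SS^2)$ with eigenvalues $n(n+1)$, $n\ge0$, eigenspaces $\mathrm{span}\{Y_n^m:|m|\le n\}$, and the $Y_n^m$ form a complete orthonormal system. Since $\gL$ acts separately on $\rho$ and on $(\theta,\varphi)$, both the bilinear form $a(\cdot,\cdot)$ and the form domain $V$ (as well as the operator domain obtained by adding the Dirichlet, resp.\ Neumann, condition at $\rho=R$) respect the orthogonal sum $L^2(\SS^2\times(0,R))=\bigoplus_{n,\,|m|\le n} Y_n^m\otimes L^2(0,R)$. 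Hence $\gL$ is block-diagonal and on the $(n,m)$-block coincides with the one-dimensional operator $L_n:h\mapsto-h''+\frac{n(n+1)}{\rho^2}h$ on $(0,R)$, with an endpoint condition at $0$ coming from membership in $V$ and in the domain of $\gL$, and with $h(R)=0$ (Dirichlet) or $h'(R)=0$ (Neumann) at $R$. On $(0,R)$ the equation $L_nh=k^2h$ has the two-dimensional solution space spanned by $\rho\mapsto\psi_n(k\rho)$ and $\rho\mapsto\chi_n(k\rho)$; the decisive point is the behaviour near $\rho=0$, where $\psi_n(k\rho)\sim c_n(k\rho)^{n+1}$ is admissible while $\chi_n(k\rho)\sim c'_n(k\rho)^{-n}$ is excluded: for $n\ge1$ it is not even in $L^2(0,R)$, and for $n=0$, although $\chi_0(k\rho)=\cos(k\rho)$ lies in $L^2$, it does not vanish at $0$, so $q/\rho\sim\rho^{-1}$ and $\gL q=-\rho\,\Delta(q/\rho)$ picks up the Dirac mass of $\Delta(\rho^{-1})$ and leaves $L^2$; this is precisely the integrability condition at the origin of \cite{BernardiDaugeMaday99}. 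Each block thus reduces to a Sturm--Liouville problem with a regular singular endpoint at $\rho=0$, with simple spectrum; since $\frac{d}{d\rho}\psi_n(k\rho)\big|_{\rho=R}=k\psi_n'(kR)$, the eigenvalues are the $k_{np}^2$ with $\psi_n(k_{np}R)=0$ in the Dirichlet case and the $(k'_{np})^2$ with $\psi_n'(k'_{np}R)=0$ in the Neumann case, the eigenfunction being $\psi_n(k_{np}\,\cdot\,)$, resp.\ $\psi_n(k'_{np}\,\cdot\,)$. Tensoring with the $Y_n^m$, $|m|\le n$, yields exactly the lists in \iti1 and \iti2; in the Neumann case the sole additional eigenfunction is the constant one (the $n=0$, $k=0$ solution $h\equiv\mathrm{const}$), which is what the adjective ``non-constant'' removes.

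For stage \emph{(b)}, fix one such eigenpair $(k^2,q)$ with $q=Y_n^m\otimes h$ and $k\neq0$. By the equivalence recalled just before \eqref{eq:sph4}, $\gL q=k^2q$ forces $\rot\rot\MM[q]=k^2\MM[q]$ and $\rot\rot\NN[q]=k^2\NN[q]$, while $\Div\MM[q]=\Div\NN[q]=0$ since both fields are curls. It remains to check the perfectly conducting condition on $\rho=R$, where $\nn=\hat\bx=\hat\rho$. By \eqref{eq:sph1}, $\MM[q]$ has zero $\hat\rho$-component, so $\MM[q]\times\hat\bx=0$ there amounts to the vanishing of its two angular components, i.e.\ to $h(R)=0$, which holds in the Dirichlet case; moreover $\NN[q]=\rot\MM[q]$ has $\hat\rho$-component $-\frac1{\rho^2}\Delta_{\SS^2}q=\frac{n(n+1)}{\rho^2}h\,Y_n^m$, which also vanishes at $\rho=R$ when $h(R)=0$, so the companion field $\HH=\frac1{ik}\NN[q\dir_{nmp}]$ satisfies $\HH\cdot\nn=0$. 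Symmetrically, by \eqref{eq:sph2} the tangential part of $\NN[q]$ equals that of $\grad(\partial_\rho q)$ and is proportional to $h'(\rho)$, so $\NN[q]\times\hat\bx=0$ on $\rho=R$ is equivalent to $h'(R)=0$, which holds in the Neumann case; and then the companion field, proportional to $\MM[q]$ through $\rot\NN[q]=\MM[\gL q]$ (\eqref{eq:sph3}, up to the sign conventions built into Lemma \ref{1L1}), again has vanishing $\hat\rho$-component, hence $\HH\cdot\nn=0$. Lemma \ref{1L1}\,(2)--(3) then shows that the triples listed in \iti1 and \iti2 are Maxwell eigenmodes of the ball. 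For $n=0$ these fields vanish identically, so the genuine modes are the ones with $n\ge1$; that the system is moreover complete (exhausts all Maxwell eigenmodes of the ball) can be obtained by an $L^2$-orthogonality argument parallel to Lemma \ref{2L3}, using completeness of $\{Y_n^m\otimes\psi_n(k_{np}\,\cdot\,)\}$ and of $\{Y_n^m\otimes\psi_n(k'_{np}\,\cdot\,)\}$.

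The step I expect to be the main obstacle is the careful analysis at the endpoint $\rho=0$: pinning down, inside the form domain $V$ and the operator domain of $\gL$, the correct ``boundary condition'' there and proving that it selects the Riccati--Bessel solution $\psi_n$ while discarding $\chi_n$ — in particular the borderline case $n=0$, where the exclusion is not an $L^2$ failure but is forced by the Dirac mass produced by $\Delta(\rho^{-1})$. Once this is settled, the rest is a blend of classical facts (spherical harmonics; oscillation theory of $J_{n+1/2}$, giving the infinitely many positive zeros $k_{np}$ and $k'_{np}$) and direct computations with the curl identities \eqref{eq:sph1}--\eqref{eq:sph3} already established in the paper.
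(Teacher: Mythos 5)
Your proposal is correct and follows essentially the same route as the paper, which derives Theorem \ref{th:ball} in the text preceding it: the $\MM,\NN$ ansatz with piloting vector $\hat\bx$, reduction to the scalar equation $\gL q=k^2q$, separation into spherical harmonics times Riccati--Bessel functions with $\chi_n$ discarded by the integrability condition at $\rho=0$, and the observation that the electric boundary condition at $\rho=R$ is exactly the Dirichlet (for $\MM$) or Neumann (for $\NN$) condition for $\gL$. One small imprecision: for $n=0$ the exclusion of $\chi_0$ is not literally that $\gL q=-\rho\,\Delta(q/\rho)$ ``leaves $L^2$'' (the Dirac mass in $\Delta(\rho^{-1})=-4\pi\delta_0$ is annihilated by the factor $\rho$, since $\rho\,\delta_0=0$); the correct statement, consistent with the paper's remark identifying $q\dir_{nmp}/\rho$ with Dirichlet eigenfunctions of $-\Delta$ on the ball, is that $\cos(k\rho)/\rho$ fails the $H^1$-integrability condition of \cite{BernardiDaugeMaday99} at the origin --- and in any case $n=0$ produces $\MM[q]=\NN[q]=0$, as you note, so the assertion about Maxwell eigenmodes is unaffected.
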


\begin{remark}
In the literature, the $\MM$ ansatz is frequently written in a slightly different way which we distinguish with an asterisk: 
\[
   \MM^\star[q^\star] = \rot(q^\star\bx)
\]
instead of $\MM[q]=\rot(q\hat\bx)$. As usual, $\NN^\star=\rot\MM^\star$. The outcome for the Maxwell eigenmodes is the same of course. Nevertheless, the interpretation of the potentials is different. We have
\[
   q^\star = \frac{q}{\rho}.
\]
\begin{enumerate}
\item Concerning Dirichlet modes, the functions $q^\star_{nmp}$ defined as $q\dir_{nmp}/\rho$ are the eigenfunctions of the Dirichlet problem for the standard positive Laplace operator $-\Delta$ on the ball. In other words, the eigenvalues $k_{np}^2$ are also the standard Laplace eigenvalues.

\item But, when Neumann modes are concerned, the functions $q^\star_{nmp}$ defined as $q\neu_{nmp}/\rho$ are not Neumann eigenfunctions for $-\Delta$.
\end{enumerate}

\end{remark}

\begin{remark} 
The tensor product potentials $Y_n^m\otimes h(k\,\cdot\,)$ with $h$ being any of the Riccati-Bessel functions have been used  more than a century ago to describe scattering of plane waves by a dielectric sphere (Mie series). Scattering resonances (with negative imaginary part) have also been investigated at that time. More recently, whispering gallery modes have been analytically calculated by a similar method \cite{BalacFeron2014}. All these problems are transmission problems between the ball and its exterior. Inside the ball $h$ has the form $\psi_n(\mm k\,\cdot\,)$ where $\mm$ is the refractive (or optical) index of the ball. Outside the ball, $h$ is either $\zeta^{(1)}_n(k  \,\cdot\,)$ for scattering, or $\chi_n(k  \,\cdot\,)$ for whispering gallery modes.
\end{remark}

We end this section by a completeness result that can be seen as a consequence of Theorem \ref{th:ball}.
\begin{corollary}
The union of the two families
\[
   \big(k_{np},\MM[q\dir_{nmp}],\frac{1}{ik}\NN[q\dir_{nmp}]\big)_{nmp}
   \quad\mbox{and}\quad
   \big(k'_{np},\NN[q\neu_{nmp}],ik\MM[q\neu_{nmp}]\big)_{nmp}
\]
described in Theorem \ref{th:ball} form a complete set of Maxwell eigenmodes.
\end{corollary}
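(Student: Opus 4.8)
\emph{Proof proposal.} The corollary is a consequence of Theorem~\ref{th:ball} together with the self-adjointness (hence the spectral completeness) of the operator $\gL$ stated just before it, and the argument runs parallel to the one of Lemma~\ref{2L3}. By Lemma~\ref{1L1} and \cite[Theorem~1.1]{CoDa1999M2AS}, the divergence-free eigenvectors of the electric problem \eqref{1E1} on the ball $\Omega$ — completed with the magnetic fields recovered through Lemma~\ref{1L1} — exhaust all Maxwell eigenmodes, there being no $k=0$ mode since $\Omega$ is simply connected with connected boundary, and they form a complete system; exactly as for Lemma~\ref{2L3} and Theorem~\ref{2T1} it therefore suffices to prove: \emph{if $\uu\in\XX\el(\Omega)$ with $\Div\uu=0$ is $L^2(\Omega)$-orthogonal to all $\MM[q\dir_{nmp}]$ and to all $\NN[q\neu_{nmp}]$, then $\uu=0$}, because the fields of Theorem~\ref{th:ball} are such eigenmodes. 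I work in spherical coordinates, writing $\uu=(\uu\cdot\hat\bx)\,\hat\bx+\uu_t$ with $\uu_t$ tangent to the spheres $\rho=\mathrm{const}$, and I use throughout that $\gL$ with the Dirichlet (resp.\ Neumann) condition at $\rho=R$ is self-adjoint with compact resolvent, so that $(q\dir_{nmp})$ (resp.\ $(q\neu_{nmp})$ together with the constant function, the Neumann eigenfunction for the eigenvalue $0$) is a Hilbert basis of $L^2(\SS^2\times(0,R))$.

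First I exploit orthogonality to the TM modes. From \eqref{eq:sph2} and $\rho\Delta(q/\rho)=-\gL q$ (see \eqref{eq:sph5}) one has $\NN[q\neu_{nmp}]=\grad(\partial_\rho q\neu_{nmp})+(k'_{np})^2\,q\neu_{nmp}\,\hat\bx$. Since $\uu\in\HH(\Div,\Omega)$ with $\Div\uu=0$, and since the Neumann condition makes $\partial_\rho q\neu_{nmp}$ vanish on $\rho=R$, a Green formula annihilates $\langle\uu,\grad(\partial_\rho q\neu_{nmp})\rangle$, leaving $(k'_{np})^2\langle\uu\cdot\hat\bx,q\neu_{nmp}\rangle_{L^2(\Omega)}=0$, i.e.\ $\langle\rho^2(\uu\cdot\hat\bx),q\neu_{nmp}\rangle_{L^2(\SS^2\times(0,R))}=0$ for all $n\ge1$. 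Moreover the divergence theorem applied to $\Div\uu=0$ on the concentric balls $\{\rho<\rho_0\}$ gives $\int_{\SS^2}(\uu\cdot\hat\bx)(\cdot,\rho_0)\,\rho_0^2\,\rd\sigma=0$ for a.e.\ $\rho_0$, so $\rho^2(\uu\cdot\hat\bx)$ is also orthogonal to every function of $\rho$ alone, in particular to the constant and to the radial Neumann eigenfunctions $q\neu_{00p}$; hence it is orthogonal to the whole Hilbert basis and $\uu\cdot\hat\bx=0$ a.e. Next I use orthogonality to the TE modes. The field $\hat\bx$ lies in $\HH(\rot,\Omega)$ with $\rot\hat\bx=0$, and $q\dir_{nmp}\in H^1(\Omega)$, so $q\dir_{nmp}\hat\bx\in\HH(\rot,\Omega)$ and $\MM[q\dir_{nmp}]=\rot(q\dir_{nmp}\hat\bx)$; as $\uu\in\HH_0(\rot,\Omega)$, a Green formula gives $0=\langle\uu,\MM[q\dir_{nmp}]\rangle=\langle\rot\uu,q\dir_{nmp}\hat\bx\rangle=\langle(\rot\uu)\cdot\hat\bx,q\dir_{nmp}\rangle_{L^2(\Omega)}$ for all $n\ge1$ (for $n=0$, $\MM[q\dir_{00p}]=0$). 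Since $\Div(\rot\uu)=0$, the divergence theorem again shows that $(\rot\uu)\cdot\hat\bx$ has zero mean over each sphere, so $\rho^2(\rot\uu)\cdot\hat\bx$ is orthogonal to all radial functions, in particular to $q\dir_{00p}$, hence to the whole Dirichlet Hilbert basis; thus $(\rot\uu)\cdot\hat\bx=0$ a.e.

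Finally, with $\uu\cdot\hat\bx=0$ the identities $\Div\uu=0$ and $(\rot\uu)\cdot\hat\bx=0$ become, for a.e.\ $\rho\in(0,R)$, $\operatorname{div}_{\SS^2}\uu_t(\cdot,\rho)=0$ and $\operatorname{curl}_{\SS^2}\uu_t(\cdot,\rho)=0$ on $\SS^2$. A tangential $L^2$ field on $\SS^2$ killed by both the surface divergence and the surface curl vanishes: by the Hodge decomposition on $\SS^2$ (whose space of harmonic fields is trivial) it equals $\grad_{\SS^2}a+\operatorname{curl}_{\SS^2}b$, and the two conditions force $\Delta_{\SS^2}a=\Delta_{\SS^2}b=0$, hence $a,b$ constant, hence the field is $0$. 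Therefore $\uu_t=0$ on a.e.\ sphere and $\uu=0$. The bulk of the work is technical rather than conceptual: tracking the weight $\rho^2$ that relates the $L^2(\Omega)$ inner product to that of $L^2(\SS^2\times(0,R))$, justifying the Green formulae and the trace statements at the regularity $\HH(\rot,\Omega)\cap\HH(\Div,\Omega)$, and handling $\rho=0$ (harmless: a null set, and the $\gL$-eigenfunctions are precisely those selected to be regular there); the Hodge decomposition on $\SS^2$ is classical. An alternative route is to establish the Debye (poloidal--toroidal) decomposition $\EE=\MM[P]+\NN[T]$ of a divergence-free eigenfield and check directly that the electric boundary condition forces a Dirichlet condition on $P$ and a Neumann condition on $T$ at $\rho=R$, while $\rot\rot\EE=k^2\EE$ forces $\gL P=k^2P$ and $\gL T=k^2T$ — but making that decomposition rigorous is itself the main obstacle on that path.
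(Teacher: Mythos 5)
Your proof is correct, and it follows the same overall skeleton as the paper's: reduce the corollary to showing that a divergence-free $\uu\in\XX\el(\Omega)$ orthogonal to all $\MM[q\dir_{nmp}]$ and $\NN[q\neu_{nmp}]$ vanishes, extract $\uu\cdot\hat\bx=0$ and $(\rot\uu)\cdot\hat\bx=0$ from the two orthogonality families, and conclude from the implication $\uu\cdot\hat\bx=0$, $\rot\uu\cdot\hat\bx=0$, $\Div\uu=0\Rightarrow\uu=0$. The execution differs in three places, each time making your version more self-contained. First, the paper argues by contradiction and invokes the spectral completeness of the regularized Maxwell operator to assume $\uu$ is itself an eigenvector, so that $\uu=k^{-2}\rot\rot\uu$ transfers the vanishing of the radial component from $\rot\rot\uu$ to $\uu$; you instead get $\uu\cdot\hat\bx=0$ for an arbitrary divergence-free field by writing $\NN[q\neu_{nmp}]=\grad(\partial_\rho q\neu_{nmp})+(k'_{np})^2\,q\neu_{nmp}\,\hat\bx$ and killing the gradient term by a Green formula, which dispenses with the eigenvector reduction entirely. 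Second, the paper cites \cite{Schulenberger78} for the final implication, whereas you prove it via the Hodge decomposition of the tangential part on each sphere $\rho=\mathrm{const}$; that is the standard argument and is sound. Third, you explicitly handle the $n=0$ harmonics, for which $\MM[q\dir_{00p}]$ and $\NN[q\neu_{00p}]$ vanish and the orthogonality relations are vacuous, by noting that $\Div\uu=0$ (resp.\ $\Div\rot\uu=0$) forces the radial component to have zero flux through every sphere; the paper's proof passes over this point silently, so your treatment is the more careful one. What each approach buys: the paper's is shorter because it delegates two nontrivial steps to abstract spectral theory and to \cite{Schulenberger78}, while yours is longer but elementary, needing only the explicit formulas \eqref{eq:sph1}--\eqref{eq:sph2}, Green formulas at the regularity of $\XX\el(\Omega)$, and classical Hodge theory on $\SS^2$. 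No gaps.
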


\begin{proof}
Let $\uu\in\XX\el(\Omega)$ such that $\Div\uu=0$. We assume that $\uu$ is orthogonal to all electric eigenvectors $\MM[q\dir_{nmp}]$ and $\NN[q\neu_{nmp}]$. We prove that $\uu=0$ by contradiction. Assuming that $\uu\neq0$ and relying on the fact that the Maxwell problem possesses an orthonormal basis of eigenfunctions, we may suppose that $\uu$ is an eigenvector itself, associated with an eigenvalue $k^2$. Since the ball $\Omega$ is topologically trivial, the condition $\Div\uu=0$ implies that $k\neq0$, whence $\uu=\frac{1}{k^2}\rot\rot\uu$. The orthogonality of $\uu$ against all  eigenvectors $\MM[q\dir_{nmp}]$ implies through integration by parts that $\rot\uu$ is orthogonal to all $q\dir_{nmp}\,\hat\bx$, hence $\rot\uu$ has a zero radial component. In a similar way, the orthogonality of against all eigenvectors $\NN[q\neu_{nmp}]$ implies that $\rot\rot\uu$, hence $\uu$, has a zero radial component. Finally, the implication
\[
   \uu\cdot\hat\bx=0,\quad \rot\uu\cdot\hat\bx=0,\quad \mbox{and}\quad \Div\uu=0 
   \quad\Longrightarrow\quad
   \uu=0
\]
can be found in \cite{Schulenberger78} and leads to a contradiction, which proves the completeness.
\end{proof}

\section{Extension to nonconstant electric permittivity}
\label{S7}

Let us consider the original Maxwell system \eqref{0E1}. We still assume that the magnetic permeability $\mu$ is equal to $\mu_0$ in the whole domain $\Omega$. But we allow now that the electric permittivity $\varepsilon$ may vary in $\Omega$. We set
\[
   \varepsilon = \varepsilon\rel\varepsilon_0,\quad \varepsilon\rel\ge1.
\]
We consider domains $\Omega$ in the product form $\omega\times I$. We assume that
\begin{equation}
\label{7E0}
   \varepsilon\rel(x) = \varepsilon\rel(\xp),\quad \varepsilon\rel\in L^\infty(\omega),
\end{equation}
like in wave guides or optic fibers. The Maxwell system takes now the form \eqref{7E02} instead of \eqref{0E2}. Then the classification of eigenvectors into TE, TM and TEM does not hold any more (at least not in the form given by Theorem \ref{2T1} and Corollary \ref{2C1}).
Nevertheless, the splitting of the spectrum according to frequencies with respect to the axial variable $x_3$ remains possible, as well as a tensor product form. We are going to investigate the magnetic field $\HH$, taking advantage of its local regularity even if $\varepsilon\rel$ is not continuous.
The magnetic variational formulation becomes, instead of \eqref{1E1m}:\\[1ex] 
{\it Find the eigenpairs $(\Lambda=\kappa^2,\uu)$ with $\uu\ne0$ in $\XX\ma(\Omega)$ and $\Div\uu=0$ such that}
\begin{equation}
\label{7E1m}
    \int_\Omega \frac1{\varepsilon\rel}\, \rot\uu\,\rot\uu' + s\Div\uu\Div\vv\;\dr\xx = 
    \Lambda \int_\Omega \uu\cdot\uu'\;\dr\xx, \quad \forall\uu'\in\XX\ma(\Omega),
\end{equation}
Here $s$ is nonnegative. The choice $s>0$ corresponds to an elliptic regularization of the system.
To simplify notations, let us assume that
\begin{equation}
\label{7E01}
   \framebox{$I = (0,\pi)$}
\end{equation} 
In the constant material case, considering the Maxwell eigenmodes from the magnetic point of view, we note that the magnetic part of eigenmodes given in Corollary \ref{2C1} have the following form
\begin{equation}
\label{7E1}
   \!\!\HH\TE = \begin{pmatrix}
         k\,\grap v(\xp) \, \cos(m x_3)\\ 
         - \Deltap v(\xp)\, \sin(m x_3)
         \end{pmatrix} \quad\mbox{and}\quad
  \HH\TM = \begin{pmatrix}
         \rotv v(\xp)  \, \cos(m x_3) \\ 0
         \end{pmatrix} 
\end{equation}
We are going to prove that we still have a similar structure with respect to the axial variable $x_3$.

\begin{theorem}
\label{7T1}
With the assumptions \eqref{7E0} and \eqref{7E01}, the magnetic eigenmodes solution of \eqref{7E1m} can be organized in a sequence of independent families $\gH_m$ with index $m\in\N$ in which each eigenvector has the tensor product form 
\begin{equation}
\label{eq:7H}
   \HH = \begin{pmatrix}
    \vp(\xp) \, \cos(m x_3)\\ 
     v_{3}(\xp)\, \sin(m x_3)
    \end{pmatrix}.
\end{equation}
For any $m\in\N$, let $\Lambda^m_{j}$ and $\vvkj := (\vpkj,v^m_{3,j})$ be the eigenpairs of the problem:\\[1ex] 
{\it Find $\Lambda\in\R$, $\vv=(\vp,v_3)\ne0$ in $\XX\ma(\omega)\times H^1(\omega)$ with $\Divp\vp + m v_3=0$ such that}
\begin{multline}
\label{7E5}
    \int_\omega \frac1{\varepsilon\rel}\, \Big\{ \rots\vp\,\rots\vp' +
    \big(\grap v_3 + m \vp\big) \cdot  \big(\grap v'_3 + m \vp'\big)
    \Big\}\,\dr\xx \\[-1ex]
    = \Lambda \int_\omega \vv\cdot\vv'\;\dr\xx, \quad 
    \forall\vv'\in\XX\ma(\omega)\times H^1(\omega).
\end{multline}
Denote by $\HH^m_{j}$ the vector of form \eqref{eq:7H} with $\vv=\vvkj$.
Then the eigenpairs $\big(\Lambda^m_{j},\HH^m_{j}\big)_{j\ge1}$ span the family $\gH_m$.
\end{theorem}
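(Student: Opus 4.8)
The plan is to exploit the fact that the regularized magnetic operator \eqref{7E1m} commutes with differentiation in $x_3$, so that its spectrum decomposes along the Fourier/cosine--sine basis on $I=(0,\pi)$. First I would observe that, because $\varepsilon\rel$ depends only on $\xp$, the bilinear forms in \eqref{7E1m} are invariant under the one-parameter group of translations in $x_3$ (extended by reflection to respect the boundary conditions at $x_3=0,\pi$). Using the boundary conditions on $\XX\ma(\Omega)$ (coming from \eqref{1E4} with roles suitably adjusted for $\HH$, i.e.\ $\HH\cdot\nn=0$ reducing to $H_3=0$ at $x_3\in\partial I$ together with the natural condition $\partial_3\HH_\perp=0$), the natural basis for the $x_3$-dependence is $\{\cos(mx_3)\}_{m\ge0}$ for $H_\perp$ and $\{\sin(mx_3)\}_{m\ge1}$ for $H_3$. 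I would invoke the tensor product structure of $\XX\ma(\Omega)$ over $\XX\ma(\omega)\times H^1(\omega)$ and of $L^2(\Omega)$ over $L^2(\omega)$, exactly as in the scalar discussion preceding Theorem~\ref{th:1}, to conclude that the whole problem block-diagonalizes into the family of reduced problems indexed by $m$.

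The key computational step is to insert the Ansatz \eqref{eq:7H} for $\HH$ and a matching test function $\uu'$ of the same form with $m'$ possibly different, and to compute $\rot\HH$ and $\Div\HH$ using formula \eqref{1E3}. One finds $\Div\HH=(\Divp\vp+mv_3)\cos(mx_3)$ (up to sign conventions), which explains the constraint $\Divp\vp+mv_3=0$ appearing in \eqref{7E5} and shows it is exactly the divergence-free condition; and $\rot\HH$ has transverse part $\rotv(v_3\sin(mx_3))+\partial_3(\ldots)$ and third component $\rots\vp\,\cos(mx_3)$, producing after the $x_3$-integration (which kills all cross terms $m\ne m'$ by orthogonality of the trigonometric system on $(0,\pi)$) precisely the terms $\rots\vp\,\rots\vp'$ and $(\grap v_3+m\vp)\cdot(\grap v_3'+m\vp')$ in \eqref{7E5}, with the $x_3$-integration contributing an overall constant ($\pi/2$) that cancels on both sides. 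I would then remark that the constraint $\Divp\vp+mv_3=0$ lets one drop the $s\Div\uu\Div\vv$ regularization term entirely on the constrained space, so that \eqref{7E5} is stated directly on the divergence-free subspace; the completeness of the trigonometric system guarantees that every magnetic eigenmode of \eqref{7E1m} (which, being divergence-free, expands in this basis) is captured by some $\gH_m$, and conversely each solution of \eqref{7E5} lifts via \eqref{eq:7H} to a genuine eigenmode.

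A subtle point worth addressing is the regularity and the status of $v_3$: for $m\ge1$ the constraint $v_3=-\frac1m\Divp\vp$ makes $v_3$ a dependent variable, so the reduced space is genuinely $\XX\ma(\omega)$ with $v_3$ slaved to $\vp$; for $m=0$ the constraint becomes $\Divp\vp=0$ and $v_3$ decouples, giving the transverse-magnetic-like family $\HH=(\rotv(\text{something}),0)$ consistent with the case $m=0$ one reads off \eqref{7E1}. I would note that the self-adjointness and discreteness of each reduced problem \eqref{7E5} follow from the same Weber/compact-embedding arguments underlying \cite[Theorem 1.1]{CoDa1999M2AS} applied on the two-dimensional domain $\omega$, so that each $\gH_m$ is an honest eigenbasis of its block.

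The main obstacle I anticipate is \textbf{not} the algebra of separation of variables but the careful handling of the function-space decomposition: one must verify that the constrained space $\{\HH\in\XX\ma(\Omega):\Div\HH=0\}$ really is the Hilbert direct sum $\bigoplus_{m\ge0}\gH_m$-type blocks, i.e.\ that the $x_3$-Fourier decomposition is compatible with membership in $\HH(\rot,\Omega)$ and $\HH_0(\Div,\Omega)$ and with the boundary conditions on $\partial\omega\times I$ versus $\omega\times\partial I$ simultaneously. This requires the density/tensor-product lemmas for $\HH(\rot)$ and $\HH(\Div)$ on product domains analogous to the $H^1$ statements quoted in the introduction, and checking that the reflection extension across $x_3=0,\pi$ needed to justify the cosine/sine expansions preserves these spaces. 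Once that functional-analytic bookkeeping is in place, the theorem follows by matching terms and invoking orthogonality of $\{\cos(mx_3)\}$ and $\{\sin(mx_3)\}$ on $(0,\pi)$.
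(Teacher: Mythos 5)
Your proposal is correct and follows essentially the same route as the paper: the odd/even reflection of $(H_3,\HH_\perp)$ across $x_3=0,\pi$ (justified by the essential condition $H_3=0$ and the natural condition $\partial_3\HH_\perp=0$ there), periodic extension to $\omega\times\T$, commutation of the operator with $\partial_3$ since $\varepsilon\rel$ is $x_3$-independent, recombination of the $e^{\pm imx_3}$ modes into the cosine/sine form \eqref{eq:7H}, and the term-by-term computation identifying $\Div\HH=(\Divp\vp+mv_3)\cos(mx_3)$ and the reduced bilinear form \eqref{7E5}. The functional-analytic bookkeeping you flag as the main obstacle is exactly what the paper's reflection-to-the-torus argument is designed to handle, so no further comparison is needed.
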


\begin{proof}
Solutions of \eqref{7E1m} satisfy on $\omega\times\{0\}$ the essential boundary condition $u_3=0$, and the natural boundary condition $\frac1{\varepsilon\rel} \rot\uu\times\ee_3=0$. Since $u_3=0$ on $\omega\times\{0\}$, $\partial_1u_3$ and $\partial_2u_3$ are also $0$ on $\omega\times\{0\}$, and the natural boundary condition implies that $\partial_3u_1=\partial_3u_2=0$ on $\omega\times\{0\}$. Therefore, defining the extension
\[
   \widetilde \uu_\perp(\xp,-x_3) = \up(\xp,x_3) \quad\mbox{and}\quad
   \tilde u_3(\xp,-x_3) = -u_3(\xp,x_3), \ \ \forall  x_3\in(0,\pi)
\]
we obtain an element $\widetilde\uu\in\XX\ma(\omega\times(-\pi,\pi))$ which satisfies $\Div\widetilde\uu=0$ and is solution of \eqref{7E1m} on the extended domain $\omega\times(-\pi,\pi)$. Moreover, $\uu(\xp,-\pi)=\uu(\xp,\pi)$ and $\partial_3\uu(\xp,-\pi)=\partial_3\uu(\xp,\pi)$ for all $\xp\in\omega$. We deduce that $\widetilde\uu$ is solution of \eqref{7E1m} on the domain $\XX\ma(\omega\times\T)$ where $\T=\R/2\pi\Z$. Since the coefficient $\varepsilon\rel$ does not depend on $x_3$, the underlying Maxwell operator commutes with $\partial_3$. Therefore the spectrum of problem \eqref{7E1m} can be decomposed according to the eigenvectors of $\partial_3$ on $\T$, which are the functions $x_3\mapsto e^{imx_3}$, $m\in\Z$.

For any positive integer $m$, we notice that if $\big(\vp(\xp), v_3(\xp)\big) e^{imx_3}$ is solution of \eqref{7E1m} on the domain $\XX\ma(\omega\times\T)$, then
$\big(\vp(\xp), -v_3(\xp)\big) e^{-imx_3}$ is also solution of the same problem. Therefore, their sum is also solution of the same problem. Moreover this sum has the form \eqref{eq:7H} and satisfies the boundary conditions (perfectly conducting walls)\footnote{%
Considering the \emph{difference} instead the sum, we would find the perfectly insulating boundary conditions on $\omega\times\partial I$.
} 
of the space $\XX\ma(\Omega)$. Conversely this sum is, up to a multiplicative constant, the only linear combination of  $\big(\vp(\xp), v_3(\xp)\big) e^{imx_3}$ and $\big(\vp(\xp), -v_3(\xp)\big) e^{-imx_3}$ which satisfies the boundary conditions of the space $\XX\ma(\Omega)$.

Calculating 
\[
   \int_\Omega \frac1{\varepsilon\rel}\, \rot\uu\,\rot\uu'\;\dr\xx
\]
for
\[
   \uu = \begin{pmatrix}
         \vv(\xp) \, \cos(m x_3)\\ 
         v_3(\xp)\, \sin(m x_3)
         \end{pmatrix}\quad\mbox{and}\quad
   \uu' = \begin{pmatrix}
         \vv'(\xp) \, \cos(m x_3)\\ 
         v'_3(\xp)\, \sin(m x_3)
         \end{pmatrix},
\]
we find
\[
    \int_\omega \frac1{\varepsilon\rel}\, \Big\{ \rots\vp\,\rots\vp' +
    \big(\rotv v_3 + m \vp\times \ee_3\big) \cdot  \big(\rotv v'_3 + m \vp'\times \ee_3\big)
    \Big\}\,\dr\xx 
\]
which coincides with the bilinear form in problem \eqref{7E5}.
\end{proof}

\begin{remark}
\label{7R1}
The bilinear form  of problem \eqref{7E5} can be regularized by
\begin{equation*}
    \int_\omega \frac1{\varepsilon\rel}\, \Big\{
    \big(\Divp\vp + mv_3 \big)   \big(\Divp\vp' + mv_3'\big)
    \Big\}\,\dr\xx.
\end{equation*}
We can check that if $\varepsilon\rel$ is constant, the resulting bilinear form is equal to
\begin{multline*}
   \frac1{\varepsilon\rel} \int_\omega
   \rots\vp\,\rots\vp' +
   \grap v_3 \cdot \grap v'_3  \\[-1ex]
   +  \Divp\vp\, \Divp\vp' + m^2 (\vp \cdot \vp'  +  v_3v'_3) \;\dr\xx.
\end{multline*}
\end{remark}

\begin{remark}
\label{7R0}
For $m=0$, problem \eqref{7E5} reduces to two uncoupled problems: The magnetic 2D Maxwell eigenvalue problem in $\omega$ for $\vp$ and the Neumann eigenvalue problem for $-\Deltap$ in $\omega$ for $v_3$. This last problem does not  yield any non-trivial solution of \eqref{7E5} since for $m=0$, the third component in the Ansatz \eqref{eq:7H} is zero. Moreover, we can show that the solutions of the magnetic 2D Maxwell eigenvalue problem in $\omega$ are the pairs $(\rotv v\dir_j,\lambda\dir_j)$, $j\ge1$, with the eigenpairs $(v\dir_j,\lambda\dir_j)$ of the problem
\begin{equation}
\label{7Es}
   -\Deltap v = \lambda \,\varepsilon v \quad\mbox{in}\quad\omega,\quad v\in H^1_0(\omega).
\end{equation} 
Thus we have found for $m=0$ the family of TM modes:
\[
   \HH\TM_{j} = \begin{pmatrix}
         \rotv v\dir_j(\xp) \\ 0
         \end{pmatrix}
         \qquad j\ge1.
\] 
\end{remark}

\appendix
\section{Normalizing Maxwell equations}
Let $\varepsilon$ and $\mu$ are the electric permittivity and the magnetic permeability of the material inside $\Omega$. We assume that the boundary of $\Omega$ represents {\em perfectly conducting} or {\em perfectly insulating} walls:
\begin{equation}
\label{0Ea}
   \partial\Omega = \partial\Omega\cnd \cup \partial\Omega\ins,\quad
   \partial\Omega\cnd \cap \partial\Omega\ins = \emptyset,
\end{equation}
where $\partial\cnd\Omega$ is the perfectly conducting part and $\partial\ins\Omega$ the perfectly insulating part.
 
The cavity resonator problem is to find the frequencies $\varpi\in\R_+$ and the non-zero electromagnetic fields $(\hat\EE,\hat\HH)\in L^2(\Omega)^6$ such that
\begin{subequations}
\begin{equation}
\label{0E0a}
   \left\{
   \begin{array}{lll}
   \rot \hat\EE - i \varpi\mu \hat\HH = 0 & \mbox{in}\quad\Omega,\quad 
   &\mbox{(Faraday law)}\\
   \rot \hat\HH + i \varpi\varepsilon \hat\EE = 0 & \mbox{in}\quad\Omega, 
   &\mbox{(Amp\`ere law)}\\
   \Div\varepsilon \hat\EE=0 \quad\mbox{and}\quad \Div\mu \hat\HH=0\quad
   & \mbox{in}\quad\Omega,
   &\mbox{(gauge conditions)}.
   \end{array}
   \right.
\end{equation}
with boundary conditions
\begin{equation}
\label{0E0b}
   \left\{
   \begin{array}{lll}
   \hat\EE\times\nn = 0 \quad\mbox{and}\quad \hat\HH\cdot\nn=0, 
   & \mbox{on}\quad\partial\Omega \cnd,
   &\mbox{(perfect conductor b.\ c.)}\\
   \hat\EE\cdot\nn = 0 \quad\mbox{and}\quad \hat\HH\times\nn=0, 
   & \mbox{on}\quad\partial\Omega \ins,
   &\mbox{(perfect insulator b.\ c.)}\\
   \end{array}
   \right.
\end{equation}
\end{subequations}

In this paper, we consider the non-magnetic case, i.e. when $\mu\equiv\mu_0$ in $\Omega$. We can set
\begin{equation}
\label{eq:A1}
   \varepsilon = \mm^2\varepsilon_0 = \varepsilon\rel\,\varepsilon_0
\end{equation} 
where $\mm$ is the refractive index of the material and $\varepsilon\rel$ the relative permittivity. Then \eqref{0E0a} reduces to 
\begin{equation}
\label{0E1}
   \left\{
   \begin{array}{lll}
   \rot \hat\EE - i \varpi\mu_0 \hat\HH = 0 & \mbox{in}\quad\Omega,\quad 
   &\mbox{(Faraday law)}\\
   \rot \hat\HH + i \varpi\varepsilon\rel\,\varepsilon_0 \hat\EE = 0 & \mbox{in}\quad\Omega, 
   &\mbox{(Amp\`ere law)}\\
   \Div\varepsilon\rel\,\varepsilon_0 \hat\EE=0 \quad\mbox{and}\quad \Div\mu_0 \hat\HH=0\quad
   & \mbox{in}\quad\Omega,
   &\mbox{(gauge conditions)}.
   \end{array}
   \right.
\end{equation}
With the normalization
\begin{equation}
\label{0E3}
   k = \varpi\sqrt{\varepsilon_0\mu_0}\quad \mbox{(wave number)},
   \quad  \EE=\sqrt{\varepsilon_0}\, \hat\EE
   \quad\mbox{and}\quad
  \HH=\sqrt{\mu_0}\, \hat\HH.
\end{equation} 
system \eqref{0E1} is transformed into
\begin{equation}
\label{7E02}
   \left\{
   \begin{array}{ll}
   \rot\EE - i k\HH = 0 & \mbox{in}\quad\Omega,\\
   \rot\HH + i k\varepsilon\rel\EE = 0 & \mbox{in}\quad\Omega,\\
   \Div\varepsilon\rel\EE=0 \quad\mbox{and}\quad \Div\HH=0\quad & \mbox{in}\quad\Omega.
   \end{array}
   \right.
\end{equation}


\begin{thebibliography}{10}

\bibitem{AmroucheBernardiDaugeGirault98}
Cherif Amrouche, Christine Bernardi, Monique Dauge and Vivette Girault, Vector
  Potentials in Three-Dimensional Non\-smooth Domains, \emph{Math. Meth. Appl.
  Sci.} {21} (1998), 823--864.

\bibitem{BalacFeron2014}
St{\'e}phane Balac and Patrice F{\'e}ron, \emph{{Whispering gallery modes
  volume computation in optical micro-spheres}}, {FOTON, UMR CNRS 6082},
  Research report, December 2014.

\bibitem{BernardiDaugeMaday99}
Christine Bernardi, Monique Dauge and Yvon Maday, \emph{Spectral methods for
  axisymmetric domains}, Series in Applied Mathematics (Paris)~3,
  Gauthier-Villars, \'Editions Scientifiques et M\'edicales Elsevier, Paris,
  1999.

\bibitem{CoDa1999M2AS}
Martin Costabel and Monique Dauge, Maxwell and {L}am\'e eigenvalues on
  polyhedra, \emph{Math. Methods Appl. Sci.} {22} (1999), 243--258.

\bibitem{CoDa2000ARMA}
\bysame, Singularities of electromagnetic fields in polyhedral domains,
  \emph{Arch. Ration. Mech. Anal.} {151} (2000), 221--276.

\bibitem{CourantHilbert53}
Richard Courant and David Hilbert, \emph{Methods of mathematical physics.
  {V}ol. {I}}, Interscience Publishers, Inc., New York, N.Y., 1953.

\bibitem{Gottlieb79}
Hans P.~W. Gottlieb, Harmonic properties of the annular membrane, \emph{J.
  Acoust. Soc. Amer.} {66} (1979), 647--650.

\bibitem{Gottlieb85}
\bysame, Eigenvalues of the {L}aplacian with {N}eumann boundary conditions,
  \emph{J. Austral. Math. Soc. Ser. B} {26} (1985), 293--309.

\bibitem{HansonYakovlev}
George~W. Hanson and Alexander~B. Yakovlev, \emph{Operator Theory for
  Electromagnetics: An Introduction}, Springer-Verlag, New York, 2002.

\bibitem{Jackson1998}
John~D. Jackson, \emph{Classical Electrodynamics, Third edition}, Wiley, New
  York, 1998.

\bibitem{Schulenberger78}
John~R. Schulenberger, The {D}ebye potential: a scalar factorization for
  {M}axwell's equations, \emph{J. Math. Anal. Appl.} {63} (1978), 502--520.

\end{thebibliography}

\providecommand{\bysame}{\leavevmode\hbox to3em{\hrulefill}\thinspace}
\providecommand{\MR}{\relax\ifhmode\unskip\space\fi MR }
\providecommand{\MRhref}[2]{%
  \href{http://www.ams.org/mathscinet-getitem?mr=#1}{#2}
}
\providecommand{\href}[2]{#2}

\end{document}